\theoremstyle{defC}
\newtheorem{exaC}[thm]{Example}
\keywords{MSO logic, undecidability}
\theoremstyle{plain}\newtheorem{theorem}[thm]{Theorem} %\crefname{satz}{Satz}{S\"atze}
\theoremstyle{definition}\newtheorem{example}[thm]{Example} %\crefname{satz}{Satz}{S\"atze}
\theoremstyle{plain}\newtheorem{lemma}[thm]{Lemma} %\crefname{satz}{Satz}{S\"atze}
\begin{document}

\title[Undecidability of a weak version of \msoU]{Undecidability of a weak version of \texorpdfstring{\msoU}{MSO+U}}
%\titlecomment{{\lsuper*}OPTIONAL comment concerning the title, \eg, 
%  if a variant or an extended abstract of the paper has appeared elsewhere.}

\author[M. Bojańczyk]{Mikołaj Bojańczyk}	%required
\address{University of Warsaw, Poland}	%required
\email{bojan@mimuw.edu.pl}  %optional
%\thanks{thanks 1, optional.}	%optional

\author[L.~Daviaud]{Laure Daviaud}	%optional
\address{City, University of London, United Kingdom}	%optional
\email{laure.daviaud@city.ac.uk}  %optional
%\thanks{thanks 2, optional.}	%optional

\author[B.~Guillon]{Bruno Guillon}	%optional
\address{LIMOS, Université Clermont Auvergne, Aubière, France}	%optional
\email{bruno.guillon@uca.fr}  %optional
%\urladdr{name3@url3\quad\rm{(optionally, a web-page can be specified)}}  %optional
%\thanks{thanks 3, optional.}	%optional

\author[V.~Penelle]{Vincent Penelle}	%optional
\address{University of Bordeaux, CNRS,  Bordeaux INP, LaBRI, UMR 5800, F-33400, Talence, France}	%optional
\email{vincent.penelle@labri.fr}  %optional

\author[A. V. Sreejith]{A. V. Sreejith}	%optional
\address{IIT Goa, India}	%optional
\email{sreejithav@iitgoa.ac.in}  %optional

%% etc.

%% required for running head on odd and even pages, use suitable
%% abbreviations in case of long titles and many authors:

%%%%%%%%%%%%%%%%%%%%%%%%%%%%%%%%%%%%%%%%%%%%%%%%%%%%%%%%%%%%%%%%%%%%%%%%%%%

%% the abstract has to PRECEDE the command \maketitle:
%% be sure not to issue the \maketitle command twice!

\begin{abstract}
	We prove the undecidability of \mso on $\omega$-words extended with the second-order predicate \weakU[X]
	which says that the distance between consecutive positions in a set $\Asetvar\subseteq\Nat$ is unbounded. 
	This is achieved by showing that adding \weakU to \mso gives a logic with the same expressive power as \msoU,
	a logic on $\omega$-words with undecidable satisfiability.
	As a corollary, we prove that \mso on $\omega$-words becomes undecidable
	if allowing to quantify over sets of positions that are ultimately periodic,
	\ie, sets \Asetvar such that for some positive integer $p$,
	ultimately either both or none of positions $x$ and $x+p$ belong to \Asetvar.
\end{abstract}

\maketitle

\section{Introduction}
%!TEX root = ../main.tex

This paper is about monadic second-order logic (\mso) on $\omega$-words.
B\"uchi's famous theorem says that
given an \mso sentence describing a set of $\omega$-words over some alphabet,
one can decide if the sentence is true in at least one $\omega$-word
\cite{buchi62}.
B\"uchi's theorem
along with its proof using automata techniques
have been the inspiration for a large number of decidability results for variants of \mso,
including Rabin's theorem on the decidability of \mso on infinite trees \cite{Rabin69}.

By now it is quite well understood that \mso is a maximal decidable logic over finite words. One formalisation of this can be found in \cite{LZ} (Theorem 9), which shows that the class of regular languages is the maximal class which: (a) has decidable satisfiability; and (b) is closed under Boolean operations and images under rational relations. The general idea behind the result in \cite{LZ} is that a non-regular language has a Myhill-Nerode equivalence relation of infinite index, and this together with closure under Boolean operations and images under rational relations can be used to generate counters, Turing machines, and then arbitrary languages in the arithmetic hierarchy. However, for languages of infinite words, the situation is different as logic over omega-words can talk about asymptotic properties, and getting a counter from asymptotic properties can be much harder.
One of the themes developed in the wake of B\"uchi's result is the question:
what can be added to \mso on $\omega$-words so that the logic remains decidable?
One direction, studied already in the sixties,
has been to extend the logic with predicates such as
\emph{``position $x$ is a square number''} or \emph{``position $x$ is a prime number''}.
See~\cite{Bes02,KLM15} and the references therein
for a discussion on this line of research.
Another direction,
which is the one taken in this paper,
is to study quantifiers which bind set variables.
These new quantifiers may for instance
talk about the number of sets satisfying a formula,
hence being midway between the universal and the existential quantifiers,
like the quantifier \emph{``there exists uncountably many sets''}~\cite{BKR11}
(which \aposteriori does not extend the expressivity of \mso
on finitely branching trees).
Another way is to consider quantifiers that talk about the asymptotic behaviors of infinite sets.
This was already the direction followed in \cite{bcc14},
where the authors define \emph{\amso}, a logic which talks about the asymptotic behaviors of sequences of integers
in a topological flavour.
Another example is \emph{the quantifier \quantifierU}
which was introduced in~\cite{Boj04}
and that says that some formula \Aform[X] holds for arbitrarily large finite set~$X$:
\begin{equation*}
	\quantifierU X\Aform[X]:
	\text{``for all $k \in \mathbb{N}$, \Aform[X] is true for some finite set $X$ of size at least $k$''}
	\text.
\end{equation*}
%For some time \msoU  was a candidate for an extension of \mso
%with decidable satisfiability,
%with promising results on decidable fragments typically using quantification only over finite sets~\cite{Boj04}.
However, in~\cite{BPT16} it was shown that
\msoU,
namely \mso extended with the quantifier \quantifierU,
has undecidable satisfiability;
see~\cite{Boj15} for a discussion on this logic.
%When extending \mso with a predicate or a quantifier \Aquantifier,
%we use the notation \msoQ.

In this work, we study a, \apriori, weaker version of \msoU, the logic \msoW,
namely \mso extended with \emph{the second-order predicate \weakU} defined by:
\begin{equation*}
	\text{\weakU[X]}:
	\text{``for all $k \in \mathbb{N}$, there exist two consecutive positions of $X$ at distance at least $k$''}
	\text.
\end{equation*}

\begin{example}
\ 

\begin{itemize}
\item The set $X = \{10 n\mid n\in \mathbb{N}\}$ does not satisfy \weakU, as for every $n$, $10(n+1) - 10n = 10$, so there are no two consecutive positions at distance at least $11$.
\item The set $X = \{2^n \mid n\in \mathbb{N}\}$ does satisfy \weakU, as for every $k$, $2^{k+1} - 2^{k} > k$, and $2^{k+1}$ and $2^k$ are consecutive in $X$.
\item The set $X = \{10m + n \mid \text{n is the }m^{\text{th}}\text{ digit of }\pi\}$ does not satisfy \weakU, the difference between two consecutive elements being at most $19$, as there is always an element between $10m$ and $10m+9$.
\end{itemize}
\end{example}

\begin{exaC}[\cite{BC06}]
	Consider the language of $\omega$-words of the form $a^{n_1}ba^{n_2}b\cdots$
	such that $\set{n_1,n_2,\ldots}$ is unbounded.
	This language can be defined in \msoU by a formula
	saying that there are factors of consecutive $a$'s of arbitrarily large size.
	It can also be defined in \msoW
	saying that the set of the positions labeled by $b$ satisfies the predicate \weakU.
\end{exaC}

It is easy to see that the predicate \weakU can be defined in the logic \msoU:
a set $X$ satisfies \weakU[X]
\iof there exist intervals (finite connected sets of positions) of arbitrarily large size
which are disjoint with $X$.
Therefore, the logic \msoW can be seen as a fragment of \msoU.
Is this fragment proper?
The main contribution of this paper
is showing that actually the two logics are the same:
\begin{theorem}\label{thm:main}
	The logics  \msoU and \msoW define the same languages of $\omega$-words, and translations both ways are effective.
\end{theorem}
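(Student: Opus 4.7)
The plan is in two steps. The easy direction $\msoW \subseteq \msoU$ has already been observed: the predicate $\weakU[X]$ is expressible in $\msoU$ as ``there exist finite intervals disjoint from $X$ of arbitrarily large size,'' so every $\msoW$-formula translates directly to an $\msoU$-formula. The substantial content of Theorem~\ref{thm:main} is the reverse inclusion $\msoU \subseteq \msoW$, which requires eliminating the $\quantifierU$ quantifier in favour of the $\weakU$ predicate. Since arbitrary $\msoU$-formulas may nest $\quantifierU$ arbitrarily, it will be enough to give a translation for a single application $\quantifierU X.\,\Aform[X]$ where $\Aform$ is already expressible in $\msoW$, and then to iterate from the innermost quantifier outward.

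To express $\quantifierU X.\,\Aform[X]$ in $\msoW$, the natural strategy is to witness the quantifier by an infinite sequence of disjoint finite sets $X_1, X_2, \ldots$, each satisfying $\Aform$, with sizes $\lvert X_i\rvert$ growing without bound. Such a sequence can be represented in plain \mso by a single set $X$ together with a marker set $Y \subseteq \Nat$ partitioning the positions into successive chunks that host the~$X_i$. The statement ``each $X_i$ satisfies $\Aform$'' is a first-order universal quantification over the chunk index, hence expressible in \mso via a relativisation of $\Aform$ to finite intervals. The payoff is obtained by arranging the encoding so that an \mso-definable set $Y'$ has its consecutive gaps equal to $\lvert X_i\rvert$: then ``$\lvert X_i\rvert$ is unbounded'' is literally $\weakU[Y']$, and the translation is complete.

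The main obstacle is justifying that $\quantifierU X.\,\Aform[X]$ really is equivalent to the existence of such a neat packaging of witnesses. A~priori the finite witnesses for different values of $k$ could overlap wildly, or could be forced to include specific early positions of the $\omega$-word, obstructing the sequential encoding. Overcoming this will require a normal form for $\msoU$-formulas: that, after a preliminary transformation, the witnesses of $\quantifierU$ can always be chosen to live inside disjoint, arbitrarily-far intervals of the $\omega$-word. I expect this normal form to follow by a compositional argument, combining Shelah's composition method for \mso on $\omega$-words with a Ramsey-type pigeonhole on the \mso-types of the finite witnesses: among arbitrarily large witnesses, infinitely many share the same type, and these can be relocated onto disjoint intervals while preserving satisfaction of $\Aform$. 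Once this reduction is in place, the encoding sketched above converts $\quantifierU X.\,\Aform[X]$ into an $\msoW$-formula, and the resulting translation is manifestly effective in both directions.
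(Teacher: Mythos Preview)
Your easy direction is fine, and your reduction of the hard direction to translating a single $\quantifierU X.\,\Aform[X]$ (with $\Aform$ already in $\msoW$) is also the right shape. The gap is in the step you flag as ``the payoff'': you assert that once the witnesses $X_1,X_2,\ldots$ live in disjoint intervals, one can arrange an \mso-definable set $Y'$ whose consecutive gaps equal $\lvert X_i\rvert$. This is precisely the obstacle, not the payoff. Plain \mso cannot count: there is no way, given a finite set $X_i$ sitting inside an interval, to produce in the same $\omega$-word a gap of length $\lvert X_i\rvert$ unless $X_i$ is itself an interval. And the witnesses cannot in general be taken to be intervals: for $\Aform[X]$ saying ``$X$ contains only even positions'', $\quantifierU X.\,\Aform[X]$ holds but no interval of size $\geq 2$ satisfies $\Aform$. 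Your Ramsey/Shelah normal-form idea may well place witnesses in disjoint blocks, but it does not turn them into intervals, nor does it let \mso read off their cardinalities as distances.

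What you are really asking for is exactly the predicate $\strongU[R,I]$ (``the number of $I$-points between consecutive $R$-points is unbounded''), and expressing this in $\msoW$ is the entire technical content of the paper. The paper's route is: first reduce $\quantifierU$ to $\strongU$ by a short argument (Lemma~\ref{lem:u-predicate}, no Ramsey needed --- witnesses need not be disjoint, only their tails beyond the previous marker are recorded); then show $\strongU$ is $\msoW$-definable (Lemma~\ref{lem:u-predicates}). The latter requires encoding the pair $R,I$ as a vector sequence, splitting into the cases ``tends to infinity'' versus ``bounded coordinates'' (Lemma~\ref{lem:unbounded-dim}), and in the hard case invoking the \emph{asymptotic mix} characterisation from the $\msoU$ undecidability proof (Lemma~\ref{lem:asymptotic-mix}). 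None of this is visible in your sketch; the compositional/Ramsey machinery you propose addresses the wrong difficulty.
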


We believe
that \msoW can be reduced to many extensions of \mso, in a simpler way
than reducing \msoU.

As an example, we consider a quantifier which talks about ultimately periodic sets.
A set of positions $X\subseteq\Nat$ is called \emph{ultimately periodic}
if there is some period $\Aperiod\in\Nat$ such that for sufficiently large positions $x\in\Nat$,
either both or none of $x$ and $x+\Aperiod$ belong to $X$.
For example, the set $\{10n \mid n\in\mathbb{N}\}$ is periodic (with period 10), but the set $\{10m + n \mid \text{n is the }m^{\text{th}}\text{ digit of }\pi\}$ is not periodic.
We consider the logic \msoP,
\ie, \mso augmented with \emph{the quantifier \quantifierP} that ranges over ultimately periodic sets:
\begin{equation*}
	\quantifierP[X]\Aform[X]:
	\text{``the formula \Aform[X] is true for all ultimately periodic sets $X$''}.
\end{equation*}
Though quantifier \quantifierP extends the expressivity of \mso,
it is \apriori not clear whether it has decidable or undecidable satisfiability.
However, using Theorem~\ref{thm:main}, we obtain the following result, which answers an open question raised in~\cite{BC06}.
\begin{theorem}
	\label{thm:periodic}
	Satisfiability over $\omega$-words is undecidable for \msoP.
	%extended with the quantifier% \quantifierP{}.
%	\begin{equation*}
%		\quantifierP X\Aform[X]:
%		\mbox{``the formula \Aform[X] is true for all ultimately periodic sets $X$''}.
%	\end{equation*}
\end{theorem}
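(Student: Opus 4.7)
The plan is to derive undecidability of satisfiability for \msoP by reducing from \msoW, whose satisfiability is undecidable by Theorem~\ref{thm:main} combined with the undecidability of \msoU from~\cite{BPT16}. It therefore suffices to express the predicate \weakU in \msoP; then every \msoW sentence can be translated to an equivalent \msoP sentence, and undecidability transfers.

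The key claim to establish is that \weakU[X] holds if and only if for every \emph{infinite} ultimately periodic set $Y$, there exist two consecutive positions $x<x'$ of $X$ such that the open interval $(x,x')$ contains at least two elements of $Y$. For the forward direction, suppose $X$ has unbounded gaps. Any infinite ultimately periodic $Y$ is periodic past some threshold $M$ with period $p$, and past $M$ its consecutive-element gaps are bounded by $p$, so any window of $2p$ consecutive positions past $M$ contains at least two elements of $Y$. Since $X$ is infinite with unbounded gaps, arbitrarily late gaps of $X$ of length at least $2p+1$ exist and witness the condition. For the converse, if the gaps of $X$ are bounded by some $k$ (which also subsumes the case $|X|\le 1$), choose $Y$ to be the multiples of $k+1$: each open gap $(x,x')$ of $X$ has length at most $k-1<k+1$ and hence contains at most one multiple of $k+1$, so the sandwich condition fails for this $Y$.

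The characterisation above is directly expressible in \msoP as $\quantifierP[Y]\big(\psi_{\text{inf}}(Y)\to\exists x\,\exists x'\,\psi(x,x',X,Y)\big)$, where $\psi_{\text{inf}}$ is the usual \mso formula expressing that $Y$ is infinite, and $\psi$ says that $x$ and $x'$ are consecutive positions of $X$ with at least two positions of $Y$ strictly between them. Replacing each occurrence of \weakU[X] in a given \msoW sentence by this equivalent \msoP formula completes the effective translation, and the theorem follows. The main obstacle I expect is pinning down the precise characterisation and its edge cases: without the requirement ``$Y$ infinite'' the choice $Y=\emptyset$ vacuously refutes the sandwich condition for every $X$, breaking the forward direction; and using ``at least one'' instead of ``at least two'' elements of $Y$ in the sandwich would be too weak to rule out bounded-gap sets $X$ (any sufficiently dense $Y$ would still witness the condition). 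Once the correct characterisation is fixed, the quantitative parameters ($2p$ versus $p$, $k+1$ versus $k$) and the first-order encoding of ``consecutive in $X$'' are routine.
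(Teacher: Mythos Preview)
Your proposal is correct and follows essentially the same route as the paper: reduce from \msoW (undecidable by Theorem~\ref{thm:main} together with~\cite{BPT16}) by expressing \weakU[X] in \msoP via a universal quantification over infinite ultimately periodic sets~$Y$. Your sandwich condition (two consecutive positions of $X$ enclosing two elements of $Y$) is a close variant of the paper's ``two positions in $Y$ not separated by an element of $X$'', and your handling of the edge cases (finite $X$, the period $k{+}1$) is if anything more careful than the paper's one-line sketch.
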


Another example is the logic \mso+probability as defined in \cite{MM}. This line of thought is not developed in the present article, but we can make the following reasoning.  A set X satisfies $\weakU(X)$ if and only if there exists an infinite set $Y$ of blocks in $X$ with the following property: there is nonzero probability of choosing a subset $Z \subseteq \mathbb{N}$ such that in every block from $Y$, there is at least one element of $Z$.  Therefore, \mso+probability is more expressive than $\msoW$.

%The logic \msoU was also used in \cite[Theorem~13]{bcc14}
%where it is reduced to \amso;
%however,
%it is not clear
%how to compare the expressive power of \msoP
%with the one of \amso over $\omega$-words.
%%
%In our case, the reduction from \msoU to \msoP
%is obtained by a sequence of reductions,
%involving other variants of \mso
%that are obtained by adding some second-order predicates.
%We first consider the logic \msoW,
%namely \mso extended with 

\subsection*{Outline of the paper.}
The rest of the paper is mainly devoted to the proof of Theorem~\ref{thm:main}.
In Section~\ref{sec:level0} we introduce an intermediate logic \msoS.
We first prove that \msoU and \msoS are effectively equivalent
(Section~\ref{ssec:from-qU-to-pU2}, Lemma~\ref{lem:u-predicate}).
Then, we prove that \msoS and \msoW are also effectively equivalent
(Section~\ref{ssec:from-pU2-to-pU1}, Lemma~\ref{lem:u-predicates}),
assuming a certain property, namely Lemma~\ref{lem:msowdef:dimtoinf}.
The proof of this latter lemma is the subject of Section~\ref{sec:unbounded dimensions}.
Finally, in Section~\ref{sec:projection},
we discuss the expressive power of \msoP
with respect to \msoU. We prove Theorem~\ref{thm:periodic} and we show that the property ``ultimately periodic'' can be expressed in \msoU
if allowing a certain encoding.

%%% Local Variables:
%%% mode: latex
%%% TeX-master: "main.tex"
%%% End:

\section{How \texorpdfstring{MSO+\ensuremath{\texttt U_1}}{MSO+U1} talks about vector sequences}
\label{sec:level0}
In this section,
we introduce a new predicate on pairs of sets of positions \strongU[R,I]
and consider the logic \mso extended with this predicate: \msoS.
We first prove in Section~\ref{ssec:from-qU-to-pU2},
that \msoU has the same expressive power as \msoS.
Then, we show in Section~\ref{ssec:from-pU2-to-pU1},
%that in order to prove Theorem~\ref{thm:main},
%it suffices to prover
that \msoS is itself as expressive as \msoW,
assuming a certain property, namely Lemma~\ref{lem:msowdef:dimtoinf}.
The proof of this lemma is given in Section~\ref{sec:unbounded dimensions}.

\subsection{From quantifier \ensuremath{\texttt U} to predicate \texorpdfstring{\ensuremath{\texttt U_2}}{U2}}
\label{ssec:from-qU-to-pU2}
We say that a sequence of natural numbers is \emph{unbounded}
if arbitrarily large numbers occur in it.
For two disjoint sets of positions $R, I \subseteq \Nat$
with $R$ infinite, we define a sequence of numbers as in Figure~\ref{fig:first-encoding}:
the $i$-th element of this sequence is the number of elements from $I$ between the $i$-th
and the $(i+1)$-th elements from $R$.
In particular, elements of $I$ in positions smaller than all the positions from $R$ are not relevant.
\begin{figure}[h]
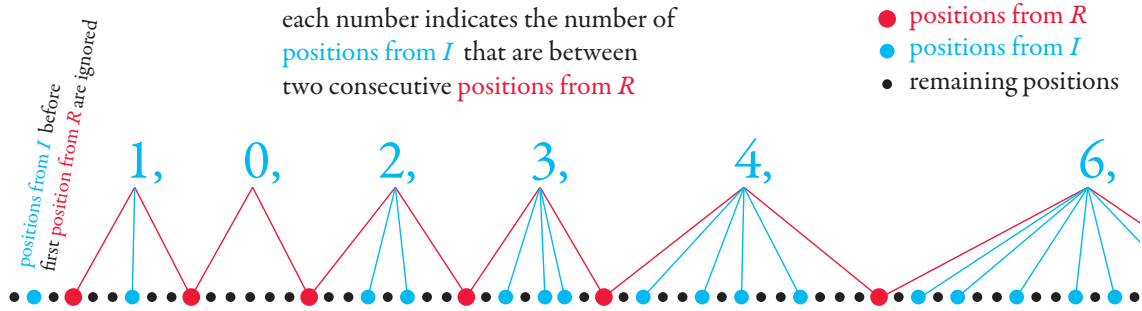

	\mypic{3}
	\caption{%
		Two sets of positions $R,I \subseteq \Nat$ and the sequence in $\Nat^\omega$ that they define.
		The sequence is only defined when $R$ and $I$ are disjoint, and $R$ is infinite.%
	}
	\label{fig:first-encoding}
\end{figure}

We now define the binary predicate \strongU:
\begin{equation*}
	\strongU[R,I]:
	\text{
		``the sequence of numbers encoded by $\Asetreset,\Asetinc$
		%(according to Figure~\ref{fig:first-encoding})
		%(Figure~\ref{fig:first-encoding})
		is defined and unbounded''%
	}.
\end{equation*}
The difference between the predicate \weakU,
which has one free set variable,
and the predicate \strongU,
which has two free set variables,
is that the latter is able to ignore some positions. It is illustrated
by the following example,
which can be easily defined in \msoS.
%but for which a definition in \msoW is not straightforward.
\begin{example}
	Let $L$ be the union of the languages of $\omega$-words
	of the form $(ac^*)^{n_1}b(ac^*)^{n_2}b\cdots$
	where $\set{n_1,n_2,\ldots}$ is unbounded.
	It can be defined in \msoS in the following way:
	an $\omega$-word belongs to $L$ \iof
	it belongs to $\big({(ac^*)}^*b\big)^\omega$
	%it has infinitely many $b$'s
	and the sequence of numbers encoded by $R=\set{\text{set of positions labeled by $b$}}$
	and $I=\set{\text{set of positions labeled by $a$}}$ is unbounded.
	The language also admits a simple definition in \msoU:
	there exist arbitrarily large finite sets of positions labeled by $a$
	such that no two positions from the set are separated by a $b$.
	It is however not straightforward to define it in the logic \msoW,
	since in every $\omega$-word,
	the distance between two successive $b$'s separated by at least one $a$
	can be increased by adding occurrences of $c$'s,
	without changing the membership of the word to the language.
	%It is however not clear
	%how to define it
	%in \msoW since it is not sufficient to ask
	%that the set of positions labeled by $b$ satisfies \weakU because of the arbitrary number of $c$'s between two $b$'s.
\end{example}

The following lemma follows essentially from the same argument as in~\cite[Lemma 5.5]{BC06}.
\begin{lemma}\label{lem:u-predicate}
	The logics \msoU and \msoS
	define the same languages of $\omega$-words,
	and translations both ways are effective.
\end{lemma}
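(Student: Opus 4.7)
The plan is to establish both inclusions by induction on formula structure, translating the distinguishing construct of one logic into the other.

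For \msoS $\subseteq$ \msoU, I replace every atomic subformula $\strongU[R, I]$ by an equivalent \msoU formula. Unfolding the definition, $\strongU[R, I]$ asserts that $R$ is infinite, that $R$ and $I$ are disjoint, and that the sequence of gap sizes (the number of $I$-elements strictly between two consecutive elements of $R$) is unbounded. The first two conditions are plainly in \mso. The third is captured by the $\quantifierU$ quantifier, since unboundedness is equivalent to the existence of arbitrarily large finite sets $Y \subseteq I$ sandwiched between two consecutive elements of $R$:
\[
    \quantifierU Y\, \bigl( Y \subseteq I \wedge \exists x\, \exists y\, (x \in R \wedge y \in R \wedge x < y \wedge \neg \exists z\, (z \in R \wedge x < z < y) \wedge \forall u\, (u \in Y \to x < u < y)) \bigr).
\]

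For the converse \msoU $\subseteq$ \msoS, I translate each innermost $\quantifierU X\,\varphi(X)$ (with $\varphi$ already in \msoS) into an equivalent \msoS formula. The intuition is that a single pair $(R, I)$ can simultaneously witness all instances of the $\quantifierU$ quantifier: the $I$-block inside each $R$-interval plays the role of a concrete witness for $\varphi$, and $\strongU[R, I]$ guarantees that these witnesses grow unboundedly. The candidate translation is
\[
\exists R\, \exists I\, \bigl(R \cap I = \emptyset \wedge R\text{ infinite} \wedge \strongU[R,I] \wedge \forall r, r' \text{ consecutive in } R\colon I\cap(r,r')=\emptyset \vee \varphi(I\cap(r,r'))\bigr).
\]
The direction from this \msoS formula to $\quantifierU X\,\varphi(X)$ is immediate: $\strongU[R, I]$ supplies arbitrarily large non-empty $I$-blocks, each of which is a witness for $\varphi$.

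The main obstacle is the converse direction. A naive placement of witnesses of $\quantifierU X\,\varphi(X)$ into pairwise disjoint intervals fails when $\varphi$ imposes global positional constraints, for instance when $\varphi(X) = $ ``$X$ is an initial segment of $\Nat$'' forces every witness to contain $0$. Following \cite[Lemma 5.5]{BC06}, the resolution uses the compositional method for \mso: at a sufficient quantifier rank, $\varphi(X)$ decomposes as a finite Boolean combination of formulas about the $\omega$-word to the left of $X$, about the marked factor at $X$, and about the $\omega$-word to the right of $X$. A pigeonhole argument isolates one such triple realised by arbitrarily large witnesses, and the finiteness of the \mso-types of word-factors combined with a Ramsey-style extraction produces such witnesses in pairwise disjoint intervals of the $\omega$-word; in the degenerate case where no such arrangement exists, the witnesses are confined to a bounded region and $\quantifierU X\,\varphi(X)$ is in fact false, so there is nothing to prove. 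The technical heart is this arrangement step, which is handled identically to \cite[Lemma 5.5]{BC06}.
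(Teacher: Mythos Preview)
Your translation of $\strongU$ into \msoU is correct and matches the paper. The problem lies in the converse direction.

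Your candidate formula requires each non-empty block $I\cap(r,r')$ \emph{itself} to satisfy $\varphi$, and you correctly notice this fails when $\varphi$ imposes global positional constraints. But the dichotomy you then invoke---either witnesses can be arranged in pairwise disjoint intervals, or all witnesses live in a bounded region so that $\quantifierU X\,\varphi(X)$ is false---does not hold. Your own example $\varphi(X)=$ ``$X$ is an initial segment of $\Nat$'' already refutes it: $\quantifierU X\,\varphi(X)$ is true, the witnesses $\{0,\ldots,n\}$ grow without bound, yet every one of them contains position~$0$, so no two can be placed in disjoint intervals. The compositional method does not rescue this: the prefix to the left of each such witness is the empty word, whose $k$-type is not realised by any non-empty prefix (already at quantifier rank~$1$), so you cannot shift a witness to the right while preserving the type triple. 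The ``arrangement step'' you defer to \cite[Lemma~5.5]{BC06} is not what that lemma does.

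The paper (following \cite[Lemma~5.5]{BC06}) sidesteps the whole issue by a one-word change in the translation: instead of demanding $\varphi(I\cap(r,r'))$, it only asks that there \emph{exist} a finite $X$ with $I\cap(r,r')\subseteq X$ and $\varphi(X)$. Witnesses may then overlap arbitrarily. One builds $R=\{x_0,x_1,\ldots\}$ greedily: set $x_0=0$; given $x_n$, pick any witness $X_{n+1}$ of $\varphi$ having at least $n$ elements beyond $x_n$, set $x_{n+1}=\max X_{n+1}$, and put $X_{n+1}\cap(x_n,x_{n+1})$ into $I$. This yields $\strongU[R,I]$ together with the containment property, and no Ramsey or compositional argument is needed.
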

\begin{proof}
	The predicate \strongU is easily seen to be expressible in \msoU:
	a sequence of numbers encoded by $R$ and $I$ is unbounded if the
	subsets of elements of $I$ between two consecutive elements of $R$
	are of arbitrarily large size, which is expressible in \msoU.
	For the converse implication, we use the following observation.
	A formula $\quantifierU[X]\Aform[X]$ is true if and only if the following condition holds:
	\begin{description}
		\item[{\namedlabel[$\mathbf{(\star)}$]{it:starstar}{$(\star)$}}]
			There exist two sets $\Asetreset,\Asetinc\subseteq\Nat$
			which satisfy \strongU[R,I]
			such that for every two consecutive positions $\Avarreset,\Avarresetbis\in\Asetreset$,
			there exists some finite \Asetvar satisfying \Aform[X]
			which contains all positions of \Asetinc between \Avarreset and \Avarresetbis.
			Here is a picture of property \ref{it:starstar}:

			\noindent\mypic{1}
	\end{description}
	Condition \ref{it:starstar} is clearly expressible in \mso with the predicate \strongU.
	Hence the lemma will follow once we prove the equivalence:
	$\quantifierU[X]\Aform(X)$ \iof \ref{it:starstar}.
	The right-to-left implication is easy to see.
	For the converse implication, we do the following construction.
	We define a sequence of positions
	$0=x_0<x_1<\ldots$
	as follows by induction.
	Define $x_0$ to be the first position,
	\ie,~the number $0$.
	Suppose that $x_n$ has already been defined.
	By the assumption $\quantifierU[X]\Aform[X]$, there exists a set $X_{n+1}$ which satisfies $\Aform$
	and which contains at least $n$ positions after $x_{n}$.
	Define $x_{n+1}$ to be the last position of $X_{n+1}$.
	This process is illustrated in the following picture:

	\noindent\mypic{4}
	Define $R$ to be all the positions $x_0,x_1,\ldots$ in the sequence thus obtained,
	and define $I$ to be the set of positions $x$
	such that $x_n<x<x_{n+1}$ and $x \in X_{n+1}$ holds for some $n$.
	By construction, the sets $I$ and $R$ thus obtained will satisfy \strongU.
\end{proof}

\subsection{From predicate \texorpdfstring{\ensuremath{\texttt U_2}}{U2} to predicate \texorpdfstring{\ensuremath{\texttt U_1}}{U1}}
\label{ssec:from-pU2-to-pU1}
Lemma~\ref{lem:u-predicate} above
states that \msoU and \msoS have the same expressive power,
thus giving a first step towards the proof of Theorem~\ref{thm:main}.
The second step, which is the key point of our result,
is the following lemma,
which states that \msoS is as expressive as \msoW.
\begin{lemma}
	\label{lem:u-predicates}
	The logics \msoS and \msoW
	define the same languages of $\omega$-words,
	and translations both ways are effective.
\end{lemma}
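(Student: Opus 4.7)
The plan is to give effective translations in both directions. The inclusion \msoW $\subseteq$ \msoS is essentially syntactic: the predicate $\weakU[X]$ can be rewritten as \emph{``$X$ is infinite and $\strongU[X, Y]$, where $Y$ is an auxiliary set variable constrained to equal $\Nat\setminus X$''}. Indeed, when $X$ is infinite with consecutive elements $x_i<x_{i+1}$, the number of positions of $\Nat\setminus X$ lying strictly between $x_i$ and $x_{i+1}$ is exactly $x_{i+1}-x_i-1$, so the two sequences track each other and are unbounded simultaneously. Infiniteness of $X$ and the constraint $\forall z.\,(z\in Y \leftrightarrow z\notin X)$ are both \mso-expressible, so substituting this formula for every occurrence of $\weakU$ translates any \msoW-formula into an equivalent \msoS-formula.

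The substantive direction is \msoS $\subseteq$ \msoW: one must express $\strongU[R,I]$ using only $\weakU$ together with ordinary \mso. The difficulty is that $\strongU$ counts the elements of $I$ inside each block carved out by the infinite set $R$, whereas $\weakU$ only observes raw spatial gaps in a single subset of $\Nat$; the position-level predicate $\weakU$ cannot, on its own, access the count of one set restricted to an interval delimited by another. The plan is to pass through an existential second-order witness: for suitable sets $Y_1,\ldots,Y_m$ quantified in \mso, we assert an \msoW-formula $\varphi(R,I,Y_1,\ldots,Y_m)$ such that (i) if $\varphi$ is satisfied by some tuple then the block-counts of $I$ inside $R$ are unbounded, and (ii) if the block-counts are unbounded then such a witness tuple exists. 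The latter existence statement is exactly the content of Lemma~\ref{lem:msowdef:dimtoinf} from Section~\ref{sec:unbounded dimensions}, which we invoke as a black box; combining (i) and (ii) then yields a translation of $\strongU[R,I]$ of the form $\exists Y_1\cdots \exists Y_m.\,\varphi$, in which $\weakU$ is applied only to the auxiliary sets.

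The hardest step is the ``completeness'' half of the above iff, that is, producing from an unbounded sequence of counts $n_i=|I\cap(r_i,r_{i+1})|$ an explicit family of witness sets whose gap structure (seen by $\weakU$) matches the unboundedness of the $n_i$. A naive attempt that picks one representative of $I$ per block fails, because the spatial gaps between consecutive representatives track the widths $r_{i+1}-r_i$ of the blocks rather than the counts $n_i$, and these can diverge independently. Circumventing this requires a careful combinatorial argument that sub-samples $R$, peels $I$ across blocks in structured ``layers'', and arranges that some combination of layers converts the growth of the $n_i$ into an unbounded raw gap witnessing $\weakU$; this is precisely the development of Section~\ref{sec:unbounded dimensions}, and I expect it to be the main technical obstacle.
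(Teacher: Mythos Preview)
Your easy direction is correct and matches the paper. The hard direction lands on the right black box---Lemma~\ref{lem:msowdef:dimtoinf}---but you have misread what that lemma asserts, and as a result a concrete reduction step is missing from your outline.

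Lemma~\ref{lem:msowdef:dimtoinf} does \emph{not} speak about the block-counts $n_i=|I\cap(r_i,r_{i+1})|$. It produces an \msoW-formula equivalent to ``the \emph{dimension} of $\vecseqofsets{R}{I}$ is unbounded'', where the dimension of the $i$-th vector is the number of maximal $I$-intervals inside the $i$-th $R$-block (see Figure~\ref{fig:encoding}), not the number of $I$-elements there. These two quantities can diverge: if $I$ consists of a single long interval in each $R$-block, the counts $n_i$ are unbounded while every dimension equals~$1$. So plugging $(R,I)$ directly into Lemma~\ref{lem:msowdef:dimtoinf} does not yield $\strongU[R,I]$, and your claim that the lemma ``is exactly the content'' of the required existence statement is not correct as stated.

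The paper fills this gap with a short \mso-definable preprocessing of $I$: remove from $I$ every even position that has a neighbour in $I$, obtaining $I'\subseteq I$ in which all points are isolated. Then (a) each block retains at least half of its $I$-elements, so $\strongU[R,I]$ holds if and only if $\strongU[R,I']$ does; and (b) since every point of $I'$ is isolated, each vector of $\vecseqofsets{R}{I'}$ has all coordinates equal to~$1$, hence its dimension equals its coordinate sum, i.e., the block-count. Now ``block-counts unbounded'' coincides with ``$\dimens{\vecseqofsets{R}{I'}}$ unbounded'', and Lemma~\ref{lem:msowdef:dimtoinf} applies to $(R,I')$. Your description of ``sub-sampling $R$, peeling $I$ in layers'' is a paraphrase of the machinery \emph{inside} Section~\ref{sec:unbounded dimensions} (the proof of Lemma~\ref{lem:msowdef:dimtoinf} itself), not of this bridging step; without the isolation of $I$, your argument does not connect $\strongU$ to the lemma you invoke.
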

Note that one direction is straightforward,
since \weakU[X] holds \iof \strongU[X,\mathbb N\setminus X] holds.
Hence, it remains to show that the predicate \strongU
can be defined by a formula of the logic \msoW with two free set variables.
\smallbreak

In our proof, we use terminology and techniques about sequences of vectors of natural numbers
that were used in the undecidability proof for \msoU~\cite{BPT16}.
A \emph{vector sequence} is defined to be an element of $\left(\Nat^*\right)^\omega$,
\ie, a sequence of possibly empty tuples of natural numbers.
For a vector sequence $\Avecseq\in\left(\Nat^*\right)^\omega$,
we define its \emph{dimension}, denoted by $\dimens\Avecseq\in\Nat^\omega$,
as being the number sequence of the dimensions of the vectors:
the $i$-th element in \dimens\Avecseq is \emph{the dimension of the $i$-th vector} in \Avecseq,
\ie,~the number of coordinates in the $i$-th tuple in \Avecseq.
We say that a vector sequence \Avecseq \emph{tends towards infinity}, denoted by \toinf\Avecseq,
if every natural number appears in finitely many vectors from \Avecseq.

Recall Figure~\ref{fig:first-encoding},
which showed how to encode a number sequence using two sets of positions $\Asetreset,\Asetinc\subseteq\Nat$.
We now show that the same two sets of positions can be used to define a vector sequence.
As it was the case in Figure~\ref{fig:first-encoding},
we assume that \Asetreset is infinite and disjoint from \Asetinc.
Under these assumptions, we write $\vecseqofsets\Asetreset\Asetinc\in\left(\Nat^*\right)^\omega$
for the vector sequence defined according to the description from Figure~\ref{fig:encoding},
\ie, the sequence of vectors whose coordinates are the lengths of intervals in \Asetinc
between two consecutive elements of \Asetreset.
Remark that $0$ is not encoded
and thus that $\vecseqofsets\Asetreset\Asetinc$ contains possibly empty vectors
with only positive coordinates.
As for Figure~\ref{fig:first-encoding},
the positions of $I$ smaller than all the positions of $R$ are not relevant.
\begin{figure}[h]
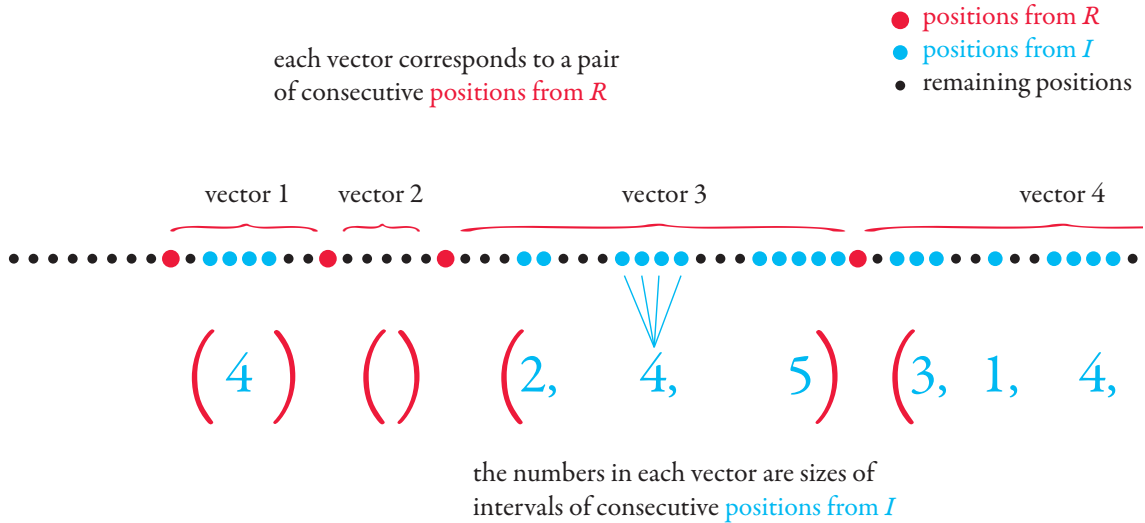

	\mypic{2}%
	\caption{%
		Two sets of positions $\Asetinc,\Asetreset\subseteq\Nat$
		and the vector sequence that they encode.
		This vector sequence is defined only when \Asetinc and \Asetreset are disjoint and \Asetreset is infinite.%
	}%
	\label{fig:encoding}%
\end{figure}
\smallbreak

The following lemma states
that the unboundedness of the dimensions of a vector sequence can be expressed in \msoW.
Its proof will be the subject of Section~\ref{sec:unbounded dimensions}.
\begin{lemma}
	\label{lem:msowdef:dimtoinf}
	There is a formula \Aform[R,I] in \msoW
	which is true \iof
	the vector sequence \vecseqofsets\Asetreset\Asetinc is defined
	and satisfies \dimens{\vecseqofsets\Asetreset\Asetinc} is unbounded.
\end{lemma}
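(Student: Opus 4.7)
The plan is to construct an explicit \msoW formula $\varphi(R,I)$ that meets the specification. Definedness of $\vecseqofsets\Asetreset\Asetinc$, i.e., $R\cap I=\emptyset$ together with $R$ infinite, is expressible in plain \mso. The substantive content is to express unboundedness of the dimension sequence, that is, that for every $k$ some $R$-interval contains at least $k$ maximal $I$-blocks.

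The starting observation is a small dictionary between \weakU applied to natural derived sets and asymptotic features of $(R,I)$: $\weakU[R]$ says that $R$-intervals are unboundedly long; $\weakU[\Nat\setminus I]$ says that $I$-blocks are unboundedly long; and $\weakU[R\cup I]$ says that inter-block gaps (positions in $\Nat\setminus(R\cup I)$) are unboundedly long. In the \emph{tame} situation where both $I$-blocks and inter-block gaps are uniformly bounded, a straightforward counting estimate shows that the length of an $R$-interval and its number of $I$-blocks are of the same order, so unbounded dimension reduces to $\weakU[R]$. Hence, under the tameness hypothesis the formula
\[
\weakU[R]\ \wedge\ \neg\weakU[\Nat\setminus I]\ \wedge\ \neg\weakU[R\cup I]
\]
already does the job.

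To handle the general case I would existentially quantify auxiliary sets $R'\supseteq R$ and $I'\subseteq I$ meant to normalize the structure: $I'$ picks one representative inside each maximal $I$-block, making $I'$-blocks singletons, while $R'$ inserts breakpoints inside long $(R,I)$-gaps, bounding the resulting gaps. The full formula would then assert definedness of $(R,I)$ together with the existence of such normalizing witnesses satisfying the tame-case formula relative to $(R',I')$. The ``one representative per maximal $I$-block'' and ``breakpoint inside each long gap'' requirements are themselves standard \mso properties, using successor and connected components of $I$.

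The main obstacle, which I expect occupies Section~\ref{sec:unbounded dimensions}, is to verify that this normalization transports unboundedness of dimension faithfully in both directions. Adding breakpoints to $R'$ splits an $R$-interval into several $R'$-intervals, redistributing the blocks of $I$ (and hence of $I'$) among them; conversely, unbounded dimension in $(R',I')$ must be traceable back to $(R,I)$. The delicate point is therefore to show that \emph{some} choice of normalizing witnesses concentrates sufficiently many blocks in a single $R'$-interval whenever $(R,I)$ has unbounded dimension, and that any witnesses demonstrating unbounded dimension in $(R',I')$ do so by exhibiting unboundedly many $I$-blocks in a single $R$-interval. This amounts to a careful combinatorial ``skeleton'' extraction, and its \mso formalization—tight enough to apply the tame-case formula, yet loose enough to be realizable by some $R',I'$—is where I expect the real work to lie.
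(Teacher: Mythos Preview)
Your reduction to a single tame case cannot succeed. Consider the following instance: in the $n$-th $R$-interval place $n$ isolated $I$-positions, each pair of consecutive ones (and the boundary gaps to $R$) at distance $2^n$. Then $\dimens{\vecseqofsets RI}$ is the sequence $1,2,3,\ldots$, hence unbounded, and since all $I$-blocks are singletons already, $I'=I$. Suppose some $R'\supseteq R$ makes $\neg\weakU[R'\cup I']$ hold, with gap bound $G$. For any $n$ with $2^n>G$, an $R'$-gap contained in the $n$-th $R$-interval can hold at most one element of $I'$ (two would be at distance $2^n>G$ with nothing from $R'$ in between, contradicting the bound on $R'\cup I'$), so that $R'$-gap has length at most $2G$. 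Since this applies to all but finitely many $R$-intervals, the gaps of $R'$ are bounded and $\weakU[R']$ fails. Hence no pair $(R',I')$ of the shape you describe makes the tame formula true, even though the dimension is unbounded. The obstruction is intrinsic: when $I$-blocks stay bounded but the spacing between them inside an $R$-interval diverges, breakpoint insertion necessarily destroys the block count, and no single $\weakU$-based length measurement recovers it.

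The paper does share part of your normalization instinct---it replaces $I$ by a superset $J$ so that property~\ref{it:star} holds (consecutive $J$-blocks are separated by exactly one position)---but it does \emph{not} then collapse everything to one tame case. Instead it invokes a dichotomy (Lemma~\ref{lem:unbounded-dim}): unbounded dimension is equivalent to the disjunction of (i) some sub-extraction tends to infinity and has unbounded dimension, and (ii) some interval-closed sub-extraction is bounded and has unbounded dimension. Branch~(ii) is indeed the tame situation and is dispatched essentially along your lines using $\weakU$ directly. Branch~(i) is the genuinely hard part, handled by Lemma~\ref{lem:msowdef:ttoinf_and_dimttoinf}, whose proof (Lemma~\ref{lem:asymptotic-mix}) imports the \emph{asymptotic mix} characterization from~\cite{BPT16}: for a sequence tending to infinity, unbounded dimension is equivalent to the existence of a strict sub-extraction satisfying a ``domination implies asymptotic mix'' condition, and that condition is what can actually be written in $\msoW$. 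Your proposal has no analogue of branch~(i), and the counterexample above lands squarely in it.
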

\begin{proof}[Proof of Lemma~\ref{lem:u-predicates} assuming Lemma~\ref{lem:msowdef:dimtoinf}]
	As the predicate \weakU is easily definable in \msoS
	(\weakU[X] holds \iof \strongU[X,X\setminus\Nat] holds),
	it remains to prove the converse,
	\ie, to exhibit a formula \Aform[X,Y] of \msoW
	which holds \iof \strongU[X,Y] holds.

	Since \strongU[X,Y] holds if and only if \strongU[X,Y\setminus X] holds
	and because \strongU[X,Y] implies that $X$ is infinite,
	it is sufficient to write a formula \Aform[\Asetreset,\Asetinc] which holds
	\iof \strongU[\Asetreset,\Asetinc] holds,
	assuming \Asetreset and \Asetinc are disjoint
	and \Asetreset is infinite.
	We fix such \Asetreset and \Asetinc,
	which hence define a vector sequence \vecseqofsets\Asetreset\Asetinc
	according to the encoding of Figure~\ref{fig:encoding}.
	Furthermore,
	\strongU[R,I] holds
	\iof summing all coordinates of each vector from the sequence
	yields a number sequence which is unbounded.
	We show that we can construct a $\Asetincbis\subseteq\Asetinc$
	such that \strongU[R,I] holds
	\iof the dimension of \vecseqofsets\Asetreset\Asetincbis is unbounded.%

	For every non negative integer $x$,
	if $2x$ belongs to \Asetinc
	as well as
	either $2x+1$ or $2x-1$,
	then we remove $2x$ from \Asetinc.
	Let \Asetincbis denote the thus obtained set.
	As at least one position from \Asetinc over two is kept in \Asetincbis,
	we have that \strongU[\Asetreset,\Asetinc] holds
	\iof
	\strongU[\Asetreset,\Asetincbis] holds.
	Furthermore, \Asetincbis
	has the property that each of its point is isolated
	whence the sum of the coordinates of a vector from \vecseqofsets\Asetreset\Asetincbis
	is equal to its dimension.
	Therefore,
	\strongU[\Asetreset,\Asetincbis] holds \iof
	the dimension of the vector sequence \vecseqofsets\Asetreset\Asetincbis is unbounded.
	Moreover, \Asetincbis is definable in \mso
	in the sense that there is a \mso-formula with two free set variables \Asetinc and \Asetincbis
	which holds if and only if \Asetincbis is obtained from \Asetinc by the process given above.

	We conclude the proof of Lemma~\ref{lem:u-predicates}
	by applying Lemma~\ref{lem:msowdef:dimtoinf}.
%
%	Therefore, thanks to Lemma~\ref{lem:u-predicate}, Lemma \ref{lem:msowdef:dimtoinf} implies Theorem \ref{thm:main}.
\end{proof}

%%% Local Variables:
%%% mode: latex
%%% TeX-master: "main.tex"
%%% End:

\section{Expressing unboundedness of dimensions}
\label{sec:unbounded dimensions}
This section is devoted to proving Lemma~\ref{lem:msowdef:dimtoinf}.
Our proof has two steps.
In Section~\ref{ssec:level1},
we show that in order to prove Lemma~\ref{lem:msowdef:dimtoinf},
it suffices to show the following lemma.
\begin{lemma}
	\label{lem:msowdef:ttoinf_and_dimttoinf}
	There is a formula \Aform[\Asetreset, \Asetinc] in \msoW
	which is true \iof
	the vector sequence \vecseqofsets\Asetreset\Asetinc is defined and satisfies
%	\[
%		\begin{array}{lcl}
			\toinf{\vecseqofsets\Asetreset\Asetinc} %&
			\text{and} %&
			\dimens{\vecseqofsets\Asetreset\Asetinc}\text{is unbounded.}
%		\end{array}
%	\]
\end{lemma}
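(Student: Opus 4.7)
The plan is to construct an explicit \msoW-formula $\Aform[\Asetreset,\Asetinc]$ witnessing the combined condition. I first reformulate it combinatorially: since dim unbounded implies that for every $k$ infinitely many indices $i$ satisfy $\dim(v_i) \geq k$, combining with $\toinf$ shows the combined condition is equivalent to ``for every $k$, there exist consecutive $r_i, r_{i+1} \in \Asetreset$ such that $(r_i, r_{i+1})$ contains at least $k$ maximal $\Asetinc$-intervals, each of length at least $k$.'' I call such a block \emph{$k$-good}.

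Next I pick an MSO-definable auxiliary set $X = \Asetreset \cup L$, where $L$ is the set of left endpoints of maximal $\Asetinc$-intervals. Within an $\Asetreset$-block of dimension $d$ the elements of $X$ enumerate in order as $r_i, a_{i,1}, \ldots, a_{i,d}, r_{i+1}$, and a short calculation shows that for each $j = 1, \ldots, d$ the consecutive $X$-distance from $a_{i,j}$ to its $X$-successor equals $L_{i,j} + g_{i,j} \geq L_{i,j}$. Hence a $k$-good block contributes to the $X$-distance sequence a run of at least $k$ consecutive distances each of magnitude at least $k$, bracketed by $\Asetreset$-elements. The task thus reduces to expressing ``for every $k$, some $\Asetreset$-block yields a run of at least $k$ consecutive $X$-distances each at least $k$'' inside \msoW.

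For this decisive step, I existentially quantify an auxiliary subset $M \subseteq L$ and apply $\weakU$ to a derived set $M'$ built from $M, \Asetreset, \Asetinc$ by an MSO-construction:
\[
    \Aform[\Asetreset,\Asetinc] \;:=\; \exists M \subseteq L : \chi(M,\Asetreset,\Asetinc) \wedge \weakU[M'].
\]
Here $\chi$ is an MSO formula asserting that each maximal run of $M$ is contained in a single $\Asetreset$-block and that the cardinality of each such run is encoded, via an MSO-definable correspondence with the left endpoints of its enclosing block, into positions of $M'$ whose consecutive distances reflect the minimum of ``run length'' and ``in-block $L$-distance.'' The $\weakU[M']$ conjunct then forces both these quantities to be simultaneously unbounded, matching the combined characterisation.

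The hard part is this third step: translating the compound growth of ``run length'' and ``in-block distance'' into a single $\weakU$-assertion, since \weakU on its own can only force the unboundedness of one distance function. My plan is to exploit the $\toinf$ part of the very conclusion being sought as a bootstrap — long $\Asetinc$-intervals act as natural large-scale markers, against which block dimensions can be counted by sums or minima of distances. If a direct MSO description of $M'$ proves too delicate, a Ramsey-style extraction of an infinite subsequence of $k$-good blocks along which the dimension grows monotonically should reduce the statement to a simpler $\weakU$-condition on the extracted subsequence.
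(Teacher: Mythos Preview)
Your combinatorial reformulation in the first step is not an equivalence: the existence of a $k$-good block for every $k$ does not imply $\toinf{\vecseqofsets{\Asetreset}{\Asetinc}}$. For instance, alternate the vectors $(1)$ and $(n,n,\ldots,n)$ (of dimension $n$) for $n=1,2,3,\ldots$; every $k$ admits a $k$-good block, yet the coordinate $1$ occurs infinitely often. This is repairable, since $\toinf{\vecseqofsets{\Asetreset}{\Asetinc}}$ is itself definable in \msoW (the paper notes this explicitly), so you could conjoin it; but you should not claim an equivalence you have not established.

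The real gap is Step~3. What you call ``the hard part'' is in fact the entire content of the lemma, and you have not carried it out. Your formula $\chi$ is supposed to encode the \emph{cardinality} of a run of $M$ into a \emph{distance} readable by $\weakU$, via ``an MSO-definable correspondence''; but MSO cannot count, and turning a count of marked positions inside an $\Asetreset$-block into a gap length is precisely an instance of the $\strongU$-to-$\weakU$ reduction that the whole paper is devoted to proving. Your fallback suggestions (``bootstrap via long $\Asetinc$-intervals'', ``Ramsey-style extraction along which the dimension grows monotonically'') are not made concrete, and the second again presupposes the ability to compare dimensions across blocks, which MSO cannot do directly. The paper avoids this circularity by a genuinely different route: it invokes the \emph{asymptotic mix} characterisation (Lemma~\ref{lem:asymptotic-mix}, building on \cite[Lemma~2.2]{BPT16}), which replaces the forbidden counting by a quantifier alternation over dominated sequences and $1$-extractions, reducing everything to asymptotic equivalence of number sequences --- a property that \emph{is} expressible with $\weakU$ alone. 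That detour is the key idea, and your proposal does not supply a substitute for it.
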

Lemma~\ref{lem:msowdef:ttoinf_and_dimttoinf} itself will be proved in Section~\ref{ssec:level2}.

\subsection{From Lemma~\ref{lem:msowdef:ttoinf_and_dimttoinf} to Lemma~\ref{lem:msowdef:dimtoinf}}
\label{ssec:level1}
Recall that the vector sequence \vecseqofsets\Asetreset\Asetinc is defined
if and only if $R$ is infinite and disjoint from a set $I$,
which is clearly expressible in \mso.
So from now on, we will always consider vector sequences \vecseqofsets\Asetreset\Asetinc which are defined. 
\bigbreak

Given a vector sequence, a \emph{sub-sequence} is an infinite vector sequence obtained by dropping some of the vectors, like this:

\noindent\mypic{7}
An \emph{extraction} is obtained by removing some of the coordinates in some of the vectors,
but keeping at least one coordinate from each vector, like this:

\noindent\mypic{8}
In particular a vector sequence containing at least one empty vector has no extraction.
An extraction is called \emph{interval-closed}
if the coordinates which are kept in a given vector are consecutive, like this:

\noindent\mypic{9}
A \emph{$1$-extraction} is an extraction with only vectors of dimension $1$.
In particular, it is interval-closed.
A \emph{sub-extraction} is an extraction of a sub-sequence.
%, like this:
%
%\noindent\mypic{10}
%\bg{the last picture could be removed.}
\medbreak

The following lemma gives two conditions equivalent to the unboundedness of the dimension of a vector sequence.
\begin{lemma}
	\label{lem:unbounded degree}
	\label{lem:unbounded-dim}
	Let \Avecseq be a vector sequence.
	Then \dimens\Avecseq is unbounded \iof
	one of the following conditions is true:
	\begin{enumerate}[ref={\textcolor{darkgray}{\textbf{(\arabic*)}}},label={\textcolor{darkgray}{\textbf{(\arabic*)}}},leftmargin=2em,nosep]
		\item
			\label{it:tinfty+unbounded-dim}
			There exists a sub-extraction \Avecseqbis of \Avecseq
			such that \toinf\Avecseqbis and \dimens\Avecseqbis is unbounded.
		\item
			\label{it:bounded+unbunded-dim}
			There exists an interval-closed sub-extraction \Avecseqbis of \Avecseq
			such that
			\dimens\Avecseqbis is unbounded
			and
			\Avecseqbis is bounded,
			\ie, there exists a bound $\Abound\in\Nat$
			such that all the coordinates occurring in the sequence are less than \Abound.
	\end{enumerate}
\end{lemma}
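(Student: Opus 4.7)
The plan is to prove the equivalence as two implications. The reverse direction (\ref{it:tinfty+unbounded-dim} or \ref{it:bounded+unbunded-dim} implies $\dimens\Avecseq$ is unbounded) is immediate: in any sub-extraction $\Avecseqbis$ of $\Avecseq$, the dimension of the $k$-th vector of $\Avecseqbis$ is bounded by the dimension of the corresponding vector of $\Avecseq$, so unboundedness of $\dimens\Avecseqbis$ lifts to $\dimens\Avecseq$.

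For the forward direction, I would proceed by a dichotomy on whether coordinates within vectors of $\Avecseq$ can stay small over long contiguous stretches. Consider the property: there exist $B\in\Nat$ and indices $i_1<i_2<\cdots$ such that the $i_k$-th vector of $\Avecseq$ contains a contiguous block of coordinates, all strictly less than $B$, whose length tends to infinity with $k$. If this property holds, keeping one such block from each $v_{i_k}$ yields an interval-closed sub-extraction of $\Avecseq$ that is bounded by $B$ and has unbounded dimension, witnessing condition~\ref{it:bounded+unbunded-dim}.

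Otherwise, for every $B\in\Nat$ there is a bound $L_B\in\Nat$ such that no vector of $\Avecseq$ contains a contiguous block of length $>L_B$ consisting only of values $<B$. A straightforward counting argument then shows that a vector of dimension $d$ contains at least $(d-L_B)/(L_B+1)$ coordinates of value $\geq B$: indeed, if $m$ coordinates of a vector are $\geq B$, they split it into at most $m+1$ blocks of values $<B$, each of length at most $L_B$, so $d\leq m+(m+1)L_B$. Since $\dimens\Avecseq$ is unbounded, I can then pick indices $j_1<j_2<\cdots$ with the dimension of $v_{j_k}$ large enough that $v_{j_k}$ contains at least $k$ coordinates of value $\geq k$, and extract precisely those coordinates from each $v_{j_k}$. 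The resulting sub-extraction $\Avecseqbis$ has dimensions going to infinity, hence unbounded; moreover any fixed $n\in\Nat$ can occur only in the first $n$ vectors of $\Avecseqbis$, so $\toinf\Avecseqbis$ holds, witnessing condition~\ref{it:tinfty+unbounded-dim}.

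The main obstacle, as I see it, is the second case: one must identify the correct dichotomy (long bounded runs versus forced large values) and then combine the counting estimate with a diagonal choice of both vectors and coordinates, so that the extraction simultaneously has unbounded dimension and tends to infinity. Once the inequality $m\geq(d-L_B)/(L_B+1)$ is in hand, the remaining construction is essentially bookkeeping.
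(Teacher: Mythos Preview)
Your proof is correct and follows essentially the same approach as the paper's. The paper argues the contrapositive direction (assuming \ref{it:tinfty+unbounded-dim} fails, it produces a single threshold~$c$ such that every vector has fewer than~$c$ coordinates exceeding~$c$, and then extracts long runs of small coordinates to obtain~\ref{it:bounded+unbunded-dim}); you run the dichotomy the other way around, first checking for long runs of bounded coordinates and, failing that, using the counting estimate $m\geq(d-L_B)/(L_B+1)$ to diagonalize towards~\ref{it:tinfty+unbounded-dim}---but this is the same counting argument and the same construction, just organized in reverse.
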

\begin{proof}
	The if implication is clear,
	since admitting a sub-extraction of unbounded dimension
	trivially implies the unboundedness of the dimension.
	
	We now focus on the only-if implication.
	Assume that \dimens\Avecseq is unbounded
	and condition~\ref{it:tinfty+unbounded-dim} is false.
	Then,
	for all sub-extractions \Avecseqbis of \Avecseq
	we have that
	\toinf\Avecseqbis implies \dimens\Avecseqbis is bounded.
	This implies that
	there exists a constant \Athreshold
	such that for all vectors \Avect in \Avecseq,
	the number of coordinates greater than \Athreshold
	is less than \Athreshold.
	Indeed, if no such \Athreshold exists,
	we can exhibit a sub-extraction \Avecseqbis of \Avecseq,
	which tends towards infinity and has unbounded dimension.
	Let us mark the coordinates smaller than \Athreshold in every vector of \Avecseq.
	For every integer \ANinteger, we can find a vector of \Avecseq with at least \ANinteger consecutive marked coordinates.
	Indeed suppose there is at most \ANinteger consecutive marked coordinates in every vector.
	Then, as there are at most \Athreshold unmarked coordinates, a vector cannot have dimension
	greater than $\ANinteger(\Athreshold+1) + \Athreshold$ (\ie, $\Athreshold + 1$ blocks of \ANinteger marked positions
	plus \Athreshold unmarked positions separating them).
	This contradicts that \dimens \Avecseq is unbounded.
	
	Thus by keeping for each vector one of the longest block of consecutive marked positions
	and removing vectors without any marked position,
	we can construct an interval-closed sub-extraction \Avecseqbis of \Avecseq which is 
	bounded by \Athreshold and has unbounded dimension.
\end{proof}

We now proceed to show how Lemma~\ref{lem:msowdef:ttoinf_and_dimttoinf} implies Lemma~\ref{lem:msowdef:dimtoinf}.
Recall that Lemma~\ref{lem:msowdef:dimtoinf} says that there is a formula \Aform[R,I] in \msoW
which is true if and only if the vector sequence \vecseqofsets\Asetreset\Asetinc is defined and satisfies
\dimens{\vecseqofsets\Asetreset\Asetinc} is unbounded.
\begin{proof}[Proof of Lemma~\ref{lem:msowdef:dimtoinf} assuming Lemma~\ref{lem:msowdef:ttoinf_and_dimttoinf}]
%	Assume that Lemma~\ref{lem:msowdef:ttoinf_and_dimttoinf} is true. 
	Consider \Asetreset and \Asetinc two sets of positions such that \vecseqofsets\Asetreset\Asetinc is defined.
	We will use Lemma~\ref{lem:unbounded-dim} to express in \msoW that \dimens{\vecseqofsets\Asetreset\Asetinc} is unbounded.
	The first condition of the lemma is expressible in \msoW by Lemma~\ref{lem:msowdef:ttoinf_and_dimttoinf},
	as shown in the next section.
	However, in order to express the second condition,
	we need that the gaps between consecutive intervals of $I$ have bounded length.
	This is ensured by extending the intervals as explained in the following.
	For any two disjoint sets $R$ and $I$, let us construct $J$ as follows:
	$J$ contains $I$, and
	for every two consecutive positions $x<y$ in $I$ such that there is no element of $R$ in between,
	we add to $J$ all the positions strictly between $x$ and $y$,
	except $y-1$, as in Figure~\ref{fig:completion}.
	The set $J$ is expressible in \mso, in the sense that there is a \mso-formula with three free set variables $R$, $I$, $J$
	which holds if and only if $J$ is obtained from $I$ by this process.
	Moreover, it is clear that \dimens{\vecseqofsets\Asetreset\Asetinc} is unbounded if and only if 
	\dimens{\vecseqofsets{\Asetreset}{J}} is unbounded
	since the dimension in both sequences are pairwise equal
	(see Figure~\ref{fig:completion}).
	It suffices thus to prove Lemma~\ref{lem:msowdef:dimtoinf} for sets $R$ and $I$ which satisfy:
	\begin{description}
		\item[{\namedlabel[$\mathbf{(\star)}$]{it:star}{$(\star)$}}]
			between two consecutive elements of $I$,
			there is either an element of $R$ or at most one position not in $I$.
	\end{description}
	\begin{figure}[h]
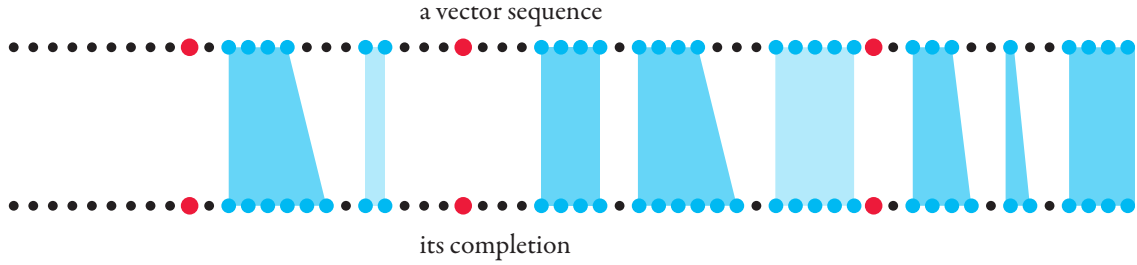

		\mypic{6}
		\caption{Completion of a vector sequence.}
		\label{fig:completion}
	\end{figure}
	\medbreak

	Let $R$ and $I$ be two sets satisfying \ref{it:star}.
	We want to express that \dimens{\vecseqofsets\Asetreset\Asetinc} is unbounded.
	%Thanks to Lemma~\ref{lem:unbounded degree},
	%it suffices to show that for each condition used in the lemma,
	It suffices to show that for each condition in Lemma~\ref{lem:unbounded degree},
	there exists a formula of \msoW which says that the condition
	is satisfied by \vecseqofsets\Asetreset\Asetinc.
	We treat the two cases separately.
	\begin{enumerate}[leftmargin=0em,itemindent=1.5em]
		\item We want to say that there exists a sub-extraction \Avecseqbis of \vecseqofsets\Asetreset\Asetinc
			such that \toinf\Avecseqbis and \dimens\Avecseqbis is unbounded.
			Sub-extraction can be simulated in \mso according to the picture in Figure~\ref{fig:sub-extraction},
			and the condition ``\toinf\Avecseqbis and \dimens\Avecseqbis is unbounded''
			can be checked using Lemma~\ref{lem:msowdef:ttoinf_and_dimttoinf}.
			\begin{figure}[h]
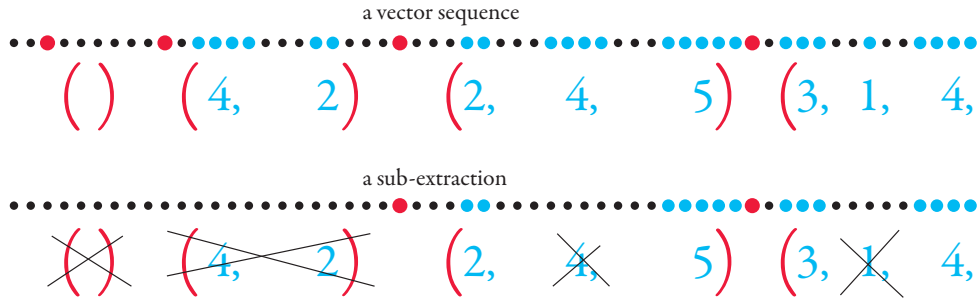

				\centering
				\mypic[.85]{5}
				\caption{Sub-extraction of a vector sequence.}
				\label{fig:sub-extraction}
			\end{figure}
		\item We want to say that
			there exists an interval-closed sub-extraction \Avecseqbis of \vecseqofsets\Asetreset\Asetinc
			such that \Avecseqbis is bounded and \dimens\Avecseqbis is unbounded.
			We use the same approach as in the previous item, \ie,~we simulate sub-extraction
			in the logic using the encoding from Figure~\ref{fig:sub-extraction}, thus obtaining $R',I'$.
			Additionally, we ensure that the selected sub-extraction is interval-closed.
			Importantly, taking an interval-closed sub-extraction preserves property \ref{it:star}.
			It remains therefore to write a formula in \msoW which says that
			sets $R',I'$ satisfy
		%	\ref{it:star},
			\vecseqofsets{R'}{I'} is bounded
			and \dimens{\vecseqofsets{R'}{I'}} is unbounded.
			\begin{itemize}[nosep]
			%	\item Property \ref{it:star} is clearly expressible in \mso.
				\item The boundedness of \vecseqofsets{R'}{I'} is expressed in \msoW
					by the fact that the complement of $I'$ has to satisfy the negation of \weakU.
					Indeed, this exactly says that the intervals of $\Asetinc'$ are bounded.
				\item Now, since \vecseqofsets{R'}{I'} is bounded then \dimens{\vecseqofsets{R'}{I'}} is unbounded
					\iof the sequence of the number of occurrences of elements of $I'$ between two consecutive elements of $R'$ is unbounded. 
					Consider the set $X$ of all the positions which are either in $R'$ or not adjacent to an element in $I'$
					(which is expressible in \mso).
					Then, since $R'$ and $I'$ satisfy $(\star)$,
					\dimens{\vecseqofsets{R'}{I'}} is unbounded \iof \weakU[X] holds.
			\end{itemize}
	\end{enumerate}
	This completes the reduction of Lemma~\ref{lem:msowdef:dimtoinf} to Lemma~\ref{lem:msowdef:ttoinf_and_dimttoinf}.
\end{proof}

%%% Local Variables:
%%% mode: latex
%%% TeX-master: "main.tex"
%%% End:

\subsection{Unboundedness of dimensions for sequences tending to infinity}
%\subsection{Proof of Lemma~\ref{lem:msowdef:ttoinf_and_dimttoinf}}
\label{ssec:level2}
We now prove Lemma~\ref{lem:msowdef:ttoinf_and_dimttoinf}
whence concluding the proof of Theorem~\ref{thm:main}.
The lemma says that
there is a formula \Aform[\Asetreset, \Asetinc] in \msoW
which is true \iof
the vector sequence \vecseqofsets\Asetreset\Asetinc is defined and satisfies
\toinf{\vecseqofsets\Asetreset\Asetinc} and
\dimens{\vecseqofsets\Asetreset\Asetinc} is unbounded.

%Observe that only the condition \toinf{\dimens{\vecseqofsets\Asetreset\Asetinc}}
%is not clearly definable in \msoW.
%The lemma says that in the case of vector sequences tending toward infinity,
%we manage to express the property.
%\medbreak

The main idea of the proof is to relate the unboundedness of the dimension
with a property definable in \msoW, namely the \emph{asymptotic mix} property,
which was already used in \cite{BPT16} to prove undecidability of \msoU.
This is done in Lemma~\ref{lem:asymptotic-mix} below,
for which we need a couple of notions on vector sequences.
\medbreak

Given an infinite set $\Asetindex \subseteq \Nat$ of positions and a number sequence $\Anumseq \in \Nat^\omega$,
the \emph{restriction of \Anumseq to \Asetindex} is the number sequence obtained by
discarding elements of index not in $\Asetindex$ and is denoted \restrict\Anumseq\Asetindex.
Two number sequences \Anumseq and \Anumseqbis are called \emph{asymptotically equivalent},
denoted by $\Anumseq\asymequiv\Anumseqbis$,
if they are bounded on the same sets of positions,
\ie, if for every set \Asetindex,
\restrict\Anumseq\Asetindex is bounded \iof \restrict\Anumseqbis\Asetindex is bounded.
For example, the number sequences
$\Anumseq=1,1,3,1,5,1,7,1,9,\ldots$
and $\Anumseqbis=2,1,4,1,6,1,8,1,10,\dots$
are asymptotically equivalent,
but $\Anumseq$ and $\Anumseqter=1,3,1,5,1,7,1,9,1,\ldots$ are not.

We say that a vector sequence $\Avecseq$ is an \emph{asymptotic mix} of a vector sequence $\Avecseqbis$
if for every $1$-extraction $\Anumseq$ of $\Avecseq$,
there exists a $1$-extraction $\Anumseqbis$ of $\Avecseqbis$
such that $\Anumseq\asymequiv\Anumseqbis$.
In particular, the empty vector cannot occur in \Avecseq or \Avecseqbis.
Note that this is not a symmetric relation.

Given two vector sequences \Avecseq and \Avecseqbis,
we say that \Avecseqbis \emph{dominates} \Avecseq, denoted $\Avecseq\leq\Avecseqbis$,
if for all \ANindex,
the \ANindex-th vectors in both sequences have the same dimension,
and the \ANindex-th vector of \Avecseq is coordinatewise smaller than or equal
to the \ANindex-th vector of \Avecseqbis.
Furthermore, we suppose that all the coordinates in \Avecseq and \Avecseqbis are positive. 

An extraction \Avecseqbis of \Avecseq is said to be \emph{strict},
if at least one coordinate of each vector of \Avecseq has been popped in \Avecseqbis,
\ie, the dimensions of the vectors of \Avecseqbis are pointwise smaller than those of \Avecseq.
In particular, a vector sequence with some vectors of dimension less than $2$ admits no strict extraction,
nevertheless, it may admit strict sub-extractions. 

The following lemma relates the notion of asymptotic mix with the unboundedness of the dimension of a vector sequence,
providing the vector sequence tends towards infinity.
\begin{lemma}
	\label{lem:asymptotic-mix}
	Let \Avecseq be a vector sequence such that \toinf\Avecseq.
	Then, \dimens\Avecseq is unbounded
	\iof
	there exists a sub-sequence \Avecseqter
	and a strict extraction \Avecseqbis of \Avecseqter satisfying:
	\begin{description}
		\item[{\namedlabel%
					[{\textcolor{darkgray}{\textbf{($P_{\Avecseqbis,\Avecseqter}$)}}}]%
					{eq:dominate->asympt-mix}%
					{($P_{\Avecseqbis,\Avecseqter}$)}%
				}%
			]
			for every $\Avecseqter'\leq\Avecseqter$,
			there exists $\Avecseqbis'\leq\Avecseqbis$
			such that
			$\Avecseqter'$ is an asymptotic mix of $\Avecseqbis'$.
	\end{description}
\end{lemma}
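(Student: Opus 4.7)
The plan is to prove the two directions separately, with the forward implication being the delicate one and the reverse being largely a pigeonhole argument.

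For the reverse direction, suppose such $\Avecseqter$ and $\Avecseqbis$ exist. Because $\Avecseqbis$ is a strict extraction of $\Avecseqter$, every vector of $\Avecseqter$ has dimension at least $2$. Assuming for contradiction that $\dimens\Avecseq$ is bounded by some $K$, a pigeonhole argument refines $\Avecseqter$ (and adjusts $\Avecseqbis$ accordingly) so that $\dimens\Avecseqter$ is constantly equal to some $m\geq 2$ and $\dimens\Avecseqbis$ is constantly equal to some $n$ with $1\leq n<m$. The plan is then to design a dominated sequence $\Avecseqter'\leq\Avecseqter$ that admits strictly more asymptotic equivalence classes of $1$-extractions than any $\Avecseqbis'\leq\Avecseqbis$ can realise: the key point is that with $m$ coordinates per vector, $\Avecseqter'$ can be equipped with small values according to $m$ distinct patterns of indices, whereas any $\Avecseqbis'$ with $n<m$ coordinates per vector cannot simultaneously simulate all of them via $1$-extractions, contradicting property $(P_{\Avecseqbis,\Avecseqter})$ applied to this $\Avecseqter'$.

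For the forward direction, assume $\dimens\Avecseq$ is unbounded. The first step is to extract a sub-sequence $\Avecseqter$ of $\Avecseq$ on which the dimensions tend to infinity, which is possible since $\Avecseq$ has vectors of arbitrarily large dimension. Using $\toinf\Avecseq$, a further refinement by a Ramsey-type argument imposes an asymptotically uniform magnitude profile across the coordinates of successive vectors, so that the vectors of $\Avecseqter$ look alike up to scale. The strict extraction $\Avecseqbis$ is then obtained by dropping a single well-chosen coordinate from each vector of $\Avecseqter$, so that $\Avecseqbis$ still has dimensions tending to infinity. To verify $(P_{\Avecseqbis,\Avecseqter})$, given an arbitrary $\Avecseqter'\leq\Avecseqter$, one defines $\Avecseqbis'\leq\Avecseqbis$ by copying, inside each vector of $\Avecseqbis$, the coordinates of $\Avecseqter'$ at the non-dropped positions; the growing dimension of $\Avecseqbis$ provides enough coordinate slots to realise every $1$-extraction of $\Avecseqter'$ as a $1$-extraction of $\Avecseqbis'$ up to asymptotic equivalence.

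The main obstacle I expect is the uniform handling of \emph{all} dominated sequences $\Avecseqter'\leq\Avecseqter$ in the forward direction. Since $\Avecseqbis$ is fixed once and for all, its coordinate structure must already encode enough information to accommodate every possible $\Avecseqter'$, even those breaking the $\toinf$ property by introducing small entries in any pattern. This forces the initial Ramsey-type extraction to be carried out so that the coordinates of $\Avecseqbis$ are spread across all scales present in $\Avecseqter$, echoing the asymptotic-mix machinery developed in \cite{BPT16} for the undecidability of \msoU. Once $\Avecseqter$ is sufficiently homogeneous in this sense, the construction of $\Avecseqbis'$ from $\Avecseqter'$ reduces to a position-wise choice of coordinates, and property $(P_{\Avecseqbis,\Avecseqter})$ can be checked directly from the definitions of asymptotic equivalence and $1$-extraction.
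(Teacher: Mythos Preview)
Your reverse direction is essentially a direct reproof of \cite[Lemma~2.2]{BPT16}, which the paper simply cites; your sketch is plausible, and the passage to sub-sequences with constant dimensions does preserve property $(P_{\Avecseqbis,\Avecseqter})$ as you implicitly assume.

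The forward direction, however, has a genuine gap. Your construction of $\Avecseqbis'$ by ``copying the coordinates of $\Avecseqter'$ at the non-dropped positions'' does not work. Suppose $\Avecseqter$ is chosen so that all coordinates of the $i$-th vector are at least~$i$ (possible since $\toinf\Avecseqter$), and $\Avecseqbis$ drops the last coordinate. Take $\Avecseqter'$ whose $i$-th vector is $(1,1,\ldots,1,i)$: all non-last coordinates equal~$1$, last coordinate equal to~$i$. Your $\Avecseqbis'$ then has $i$-th vector $(1,\ldots,1)$. The $1$-extraction of $\Avecseqter'$ selecting the last coordinate is the unbounded sequence $1,2,3,\ldots$, while every $1$-extraction of $\Avecseqbis'$ is constantly~$1$; hence $\Avecseqter'$ is not an asymptotic mix of $\Avecseqbis'$. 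No preliminary Ramsey-type homogenisation of $\Avecseqter$ can prevent this, because $\Avecseqter'$ is chosen \emph{after} $\Avecseqter$ is fixed and may put large values exactly on the dropped coordinate and $1$'s elsewhere.

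What is missing is the paper's key idea: one does not build $\Avecseqbis'$ from $\Avecseqter'$ at all. Instead one fixes a \emph{single universal} $\Avecseqbis'\leq\Avecseqbis$, independent of $\Avecseqter'$, whose $i$-th vector has $j$-th coordinate $\min(n_j,j)$ where $(n_1,\ldots,n_k)$ is the $i$-th vector of $\Avecseqter$. Because $\dimens\Avecseqter$ and the coordinates of $\Avecseqter$ both tend to infinity, the $i$-th vector of $\Avecseqbis'$ eventually contains \emph{every} value up to any prescribed threshold. Given any $1$-extraction $c$ of any $\Avecseqter'\leq\Avecseqter$, one then picks at position $i$ the largest coordinate of $\Avecseqbis'$ not exceeding $c_i$; this yields a $1$-extraction of $\Avecseqbis'$ asymptotically equivalent to $c$. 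This ``all scales present'' construction is the heart of the argument and is absent from your proposal.
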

\begin{proof}
	The if direction follows from \cite[Lemma~2.2]{BPT16}, while the converse is new to this work.
	More precisely, it is proved in \cite[Lemma~2.2]{BPT16}
	that given two vector sequences \Avecseqter and \Avecseqbis tending towards infinity,
	if \Avecseqter and \Avecseqbis have bounded dimension,
	then the two following properties are equivalent:
	\begin{description}[topsep=1ex,labelsep=.2em]
		\item[{\namedlabel[{\textcolor{darkgray}{\textbf{(1)}}}]{p1}{(1)}}]
			for infinitely many $i$'s,
			the $i$-th vector of \Avecseqter has higher dimension than the $i$-th vector of \Avecseqbis;
		\item[{\namedlabel[{\textcolor{darkgray}{\textbf{(2)}}}]{p2}{(2)}}]
			the negation of \ref{eq:dominate->asympt-mix}:
			there exists $\Avecseqter'\leq\Avecseqter$
			which is not an asymptotic mix of any $\Avecseqbis'\leq\Avecseqbis$.
	\end{description}
	Let \Avecseq be a vector sequence which tends towards infinity
	and suppose that it has bounded dimension.
	Let \Avecseqter be a sub-sequence of \Avecseq
	and \Avecseqbis be a strict extraction of \Avecseqter.
	By definition, \Avecseqter and \Avecseqbis satisfy \ref{p1},
	have bounded dimension
	and tend towards infinity.
	Therefore, by \cite[Lemma~2.2]{BPT16},
	they satisfy \ref{p2},
	\ie,
	the negation of \ref{eq:dominate->asympt-mix}.
	\medbreak

	We now prove the only if direction.
	We suppose that \Avecseq tends towards infinity and that \dimens\Avecseq is unbounded.
	There exists a sub-sequence \Avecseqter of \Avecseq such that
	every vector occurring in \Avecseqter has dimension at least $2$
	and \dimens\Avecseqter tends towards infinity.

	Define \Avecseqbis as being the extraction obtained from \Avecseqter
	by dropping the last coordinate in each vector
	and observe that it is a strict extraction of \Avecseqter.
	We prove now a result stronger than required,
	by exhibiting a universal $\Avecseqbis'\leq\Avecseqbis$
	which makes \ref{eq:dominate->asympt-mix}
	true whatever the choice of $\Avecseqter'\leq\Avecseqter$,
	hence inverting the order of the universal and the existential quantifiers.
	Let $\Avecseqbis'$ be the vector sequence defined as follows:
	for each position \ANindex,
	denoting
	the \ANindex-th vector of $\Avecseqter$
	by $\vect{\Aveccoord_1,\Aveccoord_2,\ldots,\Aveccoord_k}$,
	the \ANindex-th vector of $\Avecseqbis'$ is set to
	$%\[
	\vect{%
		\min\left(\Aveccoord_1,1\right),
		\min\left(\Aveccoord_2,2\right),
		\ldots,
		\min\left(\Aveccoord_{k-1},k-1\right)
	}
	$%\]
	.
	For instance:
	\[
		\begin{array}{rlllllll}
			\Avecseqter=&
			(2,1),&(3,2,1),&(4,3,2,1),&(5,4,3,2,1),&(6,5,4,3,2,1),&\ldots\\
			\Avecseqbis=&
			(2),&(3,2),&(4,3,2),&(5,4,3,2),&(6,5,4,3,2),&\ldots\\
			\Avecseqbis'=&
			(1),&(1,2),&(1,2,2),&(1,2,3,2),&(1,2,3,3,2),&\ldots
		\end{array}
	\]

	We now prove that every $\Avecseqter'\leq\Avecseqter$ is an asymptotic mix of $\Avecseqbis'$,
	that is, for every $1$-extraction $\Anumseqter$ of $\Avecseqter'$, there exists a $1$-extraction $\Anumseqbis$ of $\Avecseqbis'$
	such that $\Anumseqter\asymequiv\Anumseqbis$.
	We fix an arbitrary vector sequence $\Avecseqter'\leq\Avecseqter$
	and a $1$-extraction $\Anumseqter$ of $\Avecseqter'$.
	We define \Anumseqbis as the number sequence
	whose \ANindex-th number,
	for each position \ANindex,
	is the maximal coordinate in the \ANindex-th vector of $\Avecseqbis'$
	which is less than or equal to the \ANindex-th number in \Anumseqter.
	This coordinate always exists since the first coordinate of every vector in the sequence \Avecseqbis' exists and is equal to $1$.
	For instance, according to the previous example and given $\Avecseqter'$ and \Anumseqter,
	the number sequence \Anumseqbis is defined as follows:
	\[
		\begin{array}{rcccccc}
			\Avecseqter'=&
			(1,1),&(2,1,1),&(3,2,1,1),&(4,3,2,1,1),&(5,4,3,2,1,1),&\ldots\\
			\Anumseqter=&
			1,&2,&1,&4,&1,&\ldots\\
			\Avecseqbis'=&
			(1),&(1,2),&(1,2,2),&(1,2,3,2),&(1,2,3,3,2),&\ldots\\
			\hline
			\Anumseqbis=&
			1,&2,&1,&3,&1,&\ldots
		\end{array}
	\]
	Observe that, by definition, for every position \ANindex,
	the \ANindex-th number of \Anumseqbis is less than or equal to
	the \ANindex-th number of \Anumseqter.
	Hence, for every set \Asetindex of positions,
	if \restrict\Anumseqter\Asetindex is bounded then so is \restrict\Anumseqbis\Asetindex.
	Suppose now that for some set \Asetindex of positions,
	\restrict\Anumseqter\Asetindex is unbounded and fix an integer \ANinteger.
	We want to prove that some number in \restrict\Anumseqbis\Asetindex is greater than \ANinteger.
	Since both \Avecseqter and \dimens\Avecseqter tend towards infinity,
	we can find a threshold $\Athreshold\in\Nat$
	such that for every position $\ANindex\geq\Athreshold$,
	the \ANindex-th vector of \Avecseqter has dimension greater than \ANinteger
	and furthermore all its coordinates are greater than \ANinteger.
	Thus, for all positions $\ANindex\geq\Athreshold$,
	the \ANinteger-th coordinate of the \ANindex-th vector of $\Avecseqbis'$ is defined
	and it is equal to \ANinteger.
	Therefore,
	for all $\ANindex\in\Asetindex$ with $\ANindex\geq\Athreshold$
	and such that the \ANindex-th number in \Anumseqter is at least \ANinteger,
	the \ANindex-th number in \Anumseqbis is at least \ANinteger as well.
	Hence, \restrict\Anumseqbis\Asetindex is unbounded.
\end{proof}

To conclude the proof of Lemma~\ref{lem:msowdef:ttoinf_and_dimttoinf},
it remains to express in \msoW the property stated in Lemma~\ref{lem:asymptotic-mix}:
given two sets \Asetreset and \Asetinc
encoding a vector sequence \vecseqofsets\Asetreset\Asetinc
which tends towards infinity,
there exists a sub-sequence \Avecseqter of \vecseqofsets\Asetreset\Asetinc
and a strict extraction \Avecseqbis of \Avecseqter satisfying \ref{eq:dominate->asympt-mix}.
Note that we can check in \msoW
that a sequence~$\vecseqofsets\Asetreset\Asetinc$
(defined by two set variables~$\Asetreset$ and~$\Asetinc$)
tends towards infinity.
Indeed, this holds
\iof,
for any bounded set~$Y$ (\ie, such that~$\neg\weakU(Y)$),
ultimately,
the intervals of~$\Asetinc$
contain at least two elements of~$Y$.

Given a vector sequence \Avecseq and a set $S$,
we say that an encoding of \Avecseq is \emph{$S$-synchronised}
if it is of the form $S,J$ for some set $J$.
%We say that an encoding of a vector sequence $S,J$ is \emph{$S$-synchronised}.
Sub-sequences and strict extractions can be simulated in \mso
by defining subsets of the original encoding,
according to the picture in Figure~\ref{fig:sub-extraction}.
In particular,
given \Asetreset and \Asetinc,
one can construct in \mso
any sub-sequence \Avecseqter of \vecseqofsets\Asetreset\Asetinc
and any strict extraction \Avecseqbis of \Avecseqter,
such that the encodings of \Avecseqter and \Avecseqbis are $S$-synchronised for some $S\subseteq R$.

Now, we express \ref{eq:dominate->asympt-mix} on \Avecseqter and \Avecseqbis encoded by $S,J$ and $S,J'$ respectively.
Every dominated sequence $\Avecseqter'$ (\resp $\Avecseqbis'$) of \Avecseqter (\resp \Avecseqbis)
can be obtained by deleting some rightmost consecutive positions of intervals of $J$ (\resp $J'$),
keeping at least one position for each interval.
This can be done in \mso, preserving the $S$-synchronisation, as depicted below:
\smallbreak

\noindent\mypic{11}

Finally, we prove that the property saying that $\Avecseqter'$ is an asymptotic mix of $\Avecseqbis'$
can be expressed in \msoW
(using the $S$-synchronisation of the encodings of the two sequences).
Indeed, $1$-extractions can be simulated in \mso
according to the picture in Figure~\ref{fig:sub-extraction},
still preserving the $S$-synchronisation. 
Consider then $K,K'$ such that $S,K$ encodes a $1$-extraction of $\Avecseqter'$ and $S,K'$ encodes a
$1$-extraction of $\Avecseqbis'$,
\ie, such that between every two consecutive positions of $S$,
the elements of $K$ (\resp $K'$) form an interval.
It remains to prove that 
we can express in \msoW that 
\vecseqofsets SK and \vecseqofsets S{K'},
viewed as number sequences, are asymptotically equivalent.
We select a subset of indices of the two number sequences \vecseqofsets SK and \vecseqofsets S{K'}
by selecting elements of $S$
and keeping the maximal intervals in $K$ and $K'$ that directly follow the selected elements.
Then, we check thanks to \weakU
that the corresponding sequences of numbers
are both bounded or both unbounded.
This process is depicted here:
\smallbreak

\noindent\mypic{12}

%%% Local Variables:
%%% mode: latex
%%% TeX-master: "main.tex"
%%% End:

\section{How MSO+\ensuremath{\texttt U} may talk about ultimate periodicity}
%Expressivity of MSO+``ultimately periodic''}
\label{sec:projection}
We consider now the quantifier \quantifierP defined in the introduction.
Recall that a set of positions $X\subseteq\Nat$ is called \emph{ultimately periodic}
if there is some period $\Aperiod\in\Nat$ such that for sufficiently large positions $x\in\Nat$,
either both or none of $x$ and $x+\Aperiod$ belong to $X$.
%One can add reasoning about ultimately periodic sets to \mso in two ways,
%which are easily seen to be equivalent:
%(a) the predicate $P$ stating that ``$X$ is an ultimately periodic set'';
%or (b) a set quantifier which ranges only over ultimately periodic sets.
%One can add reasoning about ultimately periodic sets to \mso
%by adding a set quantifier which ranges only over ultimately periodic sets:
We consider the logic \msoP,
\ie, \mso augmented with \emph{the quantifier \quantifierP} that ranges over ultimately periodic sets:
\begin{equation*}
	\quantifierP[X]\Aform[X]:
	\text{``the formula \Aform[X] is true for all ultimately periodic sets $X$''}.
\end{equation*}
\begin{example}
	\label{ex:uperiodic}
	The language of $\omega$-words over $\set{a,b}$
	such that the positions labeled by $a$ form an ultimately periodic set
	is definable in \msoP by the following formula:
	\begin{equation*}
		\exists X
		\quad
		\underbrace{\vphantom{\Big(}\forall x \quad \big(x\in X \iff a(x)\big)}_{\text{$X$ is the set of positions labeled by $a$}}
		\quad
		\wedge
		\qquad
		\underbrace{\neg\Big(\quantifierP Y\quad\big(X\neq Y\big)\Big)}_{\text{$X$ is ultimately periodic}}
		\text.
	\end{equation*}
\end{example}

From Example~\ref{ex:uperiodic}, one can see that our quantifier \quantifierP extends strictly the expressivity of \mso, but it is \apriori not clear whether it has decidable or undecidable satisfiability. However, using Theorem~\ref{thm:main}, we can prove Theorem~\ref{thm:periodic} which states that satisfiability over $\omega$-words is undecidable for \msoP.

\begin{proof}[Proof of Theorem~\ref{thm:periodic}]
	Since \msoU has undecidable satisfiability \cite{BPT16},
	the same follows for \msoW by Theorem~\ref{thm:main}.
	Moreover, the predicate \weakU can be defined in terms of ultimate periodicity: 
	a set $X$ of positions satisfies \weakU[X]
	\iof
	for all infinite ultimately periodic sets $Y$,
	there are at least two positions in $Y$
	that are not separated by a position from $X$.
	It follows
	that every sentence of \msoW can be effectively rewritten
	into a sentence of \msoP
	which is true on the same $\omega$-words.
	Hence,
	Theorem~\ref{thm:periodic} follows.
\end{proof}

The predicate \weakU
can thus be expressed in \mso augmented with the quantifier \quantifierP.
It is not clear that the converse is true, and we leave this as an open problem.
However, in Theorem~\ref{thm:projections} below,
we show that,
up to a certain encoding, all languages expressible in \msoP can be expressed in \msoU.

Let $\Sigma$ be a finite alphabet,
\freshletter be a symbol not belonging to $\Sigma$,
and $\Sigma_\freshletter$ denote $\Sigma\cup\set\freshletter$.
We define
$\pi_\Sigma: {\Sigma_\freshletter}^\omega\to\Sigma^\omega$
to be the function which erases all appearances of \freshletter.{}
This is a partial function,
because it is only defined on $\omega$-words over $\Sigma_\freshletter$
that contain infinitely many letters from $\Sigma$.
We extend $\pi_\Sigma$ to languages of $\omega$-words in a natural way.
\begin{theorem}\label{thm:projections}
	Every language definable in \msoP over $\Sigma$
	is equal to $\pi_\Sigma(L)$
	for some language $L\subseteq{\Sigma_\freshletter}^\omega$ definable in \msoU.
\end{theorem}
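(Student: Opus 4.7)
The plan is to construct, by induction on the structure of the \msoP-formula $\varphi$, an \msoU-formula $\tilde{\varphi}$ over $\Sigma_\freshletter$ such that $L(\varphi) = \pi_\Sigma(L(\tilde{\varphi}))$. Atomic formulas, Boolean combinations, and ordinary first- and second-order quantifiers are handled routinely, by relativizing element and set variables to the non-\freshletter positions and invoking the closure of \msoU under the corresponding constructs.

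The essential case is $\varphi = \quantifierP X \psi(X)$. I would translate it into an ordinary universal set quantifier guarded by an \msoU-definable predicate, namely
\[
	\tilde{\varphi} \;:=\; \forall X\,\bigl(\textsf{UltPer}(X) \Rightarrow \tilde{\psi}(X)\bigr),
\]
where $\textsf{UltPer}(X)$ asserts that $X$, restricted to non-\freshletter positions, corresponds to an ultimately periodic subset of the original word $\pi_\Sigma(w')$. The existential choice of padding $w'$ given by $\pi_\Sigma$ plays the role of a scratch space that makes $\textsf{UltPer}$ expressible. Concretely, $\tilde{\varphi}$ would additionally require $w'$ to carry a \freshletter-schedule in which the lengths of maximal \freshletter-blocks cofinally attain every positive integer (each infinitely often), so that every candidate period $p$ appears as the length of some \freshletter-block; this richness condition is itself \msoU-definable using unboundedness predicates of the \weakU flavor. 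Given such a $w'$, one expresses in \mso that ``$X$ is periodic with period $p$ beyond some threshold'' by using the length-$p$ \freshletter-block as a ruler: a set-valued bijection, itself definable in \mso, matches intervals of $p$ consecutive non-\freshletter positions to that block, from which one checks that pairs of $w$-positions at distance $p$ agree on membership in $X$. The predicate $\textsf{UltPer}$ is then the existential closure over all such $p$.

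I would then verify the translation by induction. For completeness, given $w \models \varphi$, one constructs a padding $w'$ realizing the \freshletter-schedule by interleaving the characters of $w$ with \freshletter-blocks of cofinally-attained lengths, and checks that every ultimately periodic $X$ in $w$ lifts to a set satisfying $\textsf{UltPer}$ in $w'$. For soundness, every $X$ satisfying $\textsf{UltPer}$ in a model $w' \models \tilde{\varphi}$ projects to an ultimately periodic subset of $\pi_\Sigma(w')$, since the predicate faithfully transcribes the period-$p$ condition from $w'$ back to $w$.

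The main obstacle I expect lies in formulating $\textsf{UltPer}$ so that it captures ultimate periodicity of $X$ in $w$ and not a distorted variant specific to $w'$, and in ensuring that a single padding $w'$ simultaneously witnesses all \quantifierP quantifiers (possibly nested) in $\varphi$. The latter is addressed by preprocessing $\varphi$ into a normal form where \quantifierP quantifiers are brought to the outermost level, so that the existential projection can commit in advance to a universally rich padding. The combinatorial heart of the argument is the \mso-definable matching between \freshletter-blocks of length $p$ and intervals of $p$ consecutive non-\freshletter positions of $w'$, which provides the counting needed to detect period $p$ in the projected word.
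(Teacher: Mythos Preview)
Your overall architecture matches the paper's: translate by induction, relativise ordinary quantifiers to non-$\freshletter$ positions, and replace $\quantifierP X\,\psi(X)$ by a universal set quantifier guarded by an ``ultimately periodic in the projection'' predicate. The gap is in how you realise that predicate. You identify the ``combinatorial heart'' as an \mso-definable bijection between a $\freshletter$-block of length~$p$ and an interval of $p$ consecutive non-$\freshletter$ positions, but no such bijection is \mso-definable: \mso on $\omega$-words cannot compare the cardinalities of two arbitrary finite sets (equivalently, cannot express ``interval $A$ has the same length as interval $B$''), and nothing in your proposal explains how the $\quantifierU$ quantifier would supply this missing equicardinality test for two \emph{specific} finite sets. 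The same problem already hits your padding requirement: ``every positive integer occurs as a $\freshletter$-block length, each infinitely often'' asks for length comparisons between blocks and is not obviously \msoU-definable either; $\weakU$-style predicates give you unboundedness, not exact equality.

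The paper closes this gap with a different mechanism. It requires the padding sequence $f$ merely to \emph{tend to infinity} and then invokes the key lemma from~\cite{BPT16} (restated as Lemma~\ref{lem:ultimately constant}) that, for a vector sequence tending to infinity, ``the dimension is ultimately constant'' is \msoU-definable. With $I$ the set of $\freshletter$-positions, an infinite $R\subseteq\Sigma$-positions has ultimately constant $\dim\vecseqofsets{R}{I}$ exactly when $R$ eventually hits every $p$-th $\Sigma$-position for some fixed $p$; this gives the period detector without any ad hoc bijection. Ultimate periodicity of $Y$ is then expressed by: there is such an $R$, and every alternating $R'$ with the same ``constant dimension'' property is eventually contained in or disjoint from $Y$. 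A side benefit is that the translation is uniform in $f$, so nested $\quantifierP$ quantifiers pose no difficulty and no prenex preprocessing is needed.
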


The main difficulty in expressing ``ultimate periodicity''
is to check the existence of a constant, namely the period,
which is ultimately repeated.
Such a property was already the crucial point in the proof of the undecidability of \msoU \cite{BPT16}.
Indeed the authors managed to express
that an encoded vector sequence (as in Figure~\ref{fig:encoding})
tending towards infinity
has \emph{``ultimately constant dimension''},
\ie, all but finitely many vectors of the sequence have the same dimension
(this dimension will represent the period). This is the content of the following lemma which is a direct consequence of \cite[Lemma~3.1]{BPT16}.
\begin{lemma}
	\label{lem:ultimately constant}
	There exists a formula \Aform[R,I] in \msoU
	with two free set variables
	which holds \iof
	\vecseqofsets RI is defined,
	tends towards infinity,
	and \dimens{\vecseqofsets RI} is ultimately constant.
\end{lemma}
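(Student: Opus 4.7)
The plan is to assemble the desired \msoU formula as a conjunction of three sub-formulas, corresponding to the three requirements in the statement: (i) \vecseqofsets RI is defined, (ii) \toinf{\vecseqofsets RI}, and (iii) \dimens{\vecseqofsets RI} is ultimately constant.

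Part (i) is purely \mso: it suffices to say that $R$ is infinite and disjoint from $I$. Part (ii) is a direct application of \quantifierU. A vector sequence tends to infinity \iof for every $k$, only finitely many coordinates are at most $k$; translated through the encoding of Figure~\ref{fig:encoding}, this says that for every $k$ only finitely many maximal intervals of $I$ between two consecutive elements of $R$ have length at most $k$. The existence of arbitrarily large finite collections of such ``short'' intervals is exactly what \quantifierU detects on the negation.

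Part (iii) is the substantive step, and here I would directly invoke [BPT16, Lemma~3.1], which already provides a formula of \msoU expressing that an encoded vector sequence tends to infinity and has ultimately constant dimension. The overall formula is then the conjunction of (i) with the formula provided by [BPT16, Lemma~3.1], which automatically enforces both (ii) and (iii). The idea behind [BPT16, Lemma~3.1] is to witness the eventual period $p$ by an auxiliary set marking, in each vector past some threshold, a canonical coordinate (for instance the last one), and then to use \quantifierU to prohibit any long-term deviation from this period: a deviation between the dimensions of infinitely many pairs of consecutive vectors, combined with the ``tends to infinity'' hypothesis, produces arbitrarily large finite configurations of mismatches which can be caught by the \quantifierU quantifier.

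The main obstacle is exactly this last point: \quantifierU naturally expresses ``arbitrarily large finite sets'' and not ``eventually equal to some fixed constant''. The resolution, which is the real content of [BPT16, Lemma~3.1] and why the lemma is cited rather than re-proved, is to exploit hypothesis~(ii) to amplify any non-stabilisation of the dimension into an unbounded family of finite witnesses, so that ultimate constancy becomes the \emph{absence} of such an unbounded family and hence the negation of a \quantifierU statement. Once [BPT16, Lemma~3.1] is in hand, the proof reduces to routine bookkeeping to separate condition (i) (which [BPT16] may phrase as a standing assumption) from the remaining \msoU content.
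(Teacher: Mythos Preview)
Your proposal is correct and matches the paper's approach exactly: the paper does not give an independent argument but simply records that the lemma ``is a direct consequence of \cite[Lemma~3.1]{BPT16}'', which is precisely what you do. Your additional remarks on how to handle conditions (i) and (ii) separately, and your informal sketch of the mechanism behind \cite[Lemma~3.1]{BPT16}, go beyond what the paper provides, but the core step---deferring to the cited lemma---is identical.
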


In order to use Lemma \ref{lem:ultimately constant}, we will encode a word 	
$w$ over $\Sigma$ by adding factors of consecutive \freshletter
between every two letters from $\Sigma$,
such that the lengths of those factors tend towards infinity.
%In order to prove Theorem~\ref{thm:projections},
%we will padd the $\omega$-word using \freshletter,
%so that the vector sequences we consider will tend towards infinity.
%
Let $w=w_1w_2\cdots$ be an $\omega$-word over $\Sigma$
where the $w_i$'s are letters,
and let $\Anumseq=n_1,n_2,\ldots$ be a number sequence.
We define the $\omega$-word
$w_\Anumseq$ over $\Sigma_\freshletter$
as $w_1\freshletter^{n_1}w_2\freshletter^{n_2}\cdots$.
In particular $\pi_\Sigma(w_\Anumseq)=w$.
Now, given a sentence \Aform of \msoP over $\Sigma$,
we consider the language $L$ over $\Sigma_\freshletter$
of all the $\omega$-words of the form $w_\Anumseq$
for $w$ satisfying \Aform
and \Anumseq a number sequence tending towards infinity.
As
we can easily check in \msoU
that the sequence of lengths of the factors of \freshletter
tends towards infinity,
%there is a formula in \msoU satisfied exactly
%by the $\omega$-words of the form $w_\Anumseq$
%such that \Anumseq tends towards infinity,
Theorem~\ref{thm:projections} follows immediately from the following lemma.
%for the language $L$ defined above.
\begin{lemma}
	For every sentence \Aform in \msoP,
	one can effectively construct a formula $\Aform_\freshletter$ in \msoU
	such that 
	for every $\omega$-word $w\in\Sigma^\omega$,
	the following conditions are equivalent:
	\begin{enumerate}[nosep]
		\item \Aform is true in $w$;
		\item $\Aform_\freshletter$ is true in $w_f$ for every $f$ which tends towards infinity.
	\end{enumerate}
\end{lemma}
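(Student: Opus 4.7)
The plan is structural induction on \Aform. For atomic formulas and for the cases of Boolean connectives, first-order quantifiers and ordinary second-order MSO set quantifiers, the translation is the routine relativization to the \mso-definable set $R$ of positions of $w_f$ that carry a $\Sigma$-letter (as opposed to \freshletter): first-order variables are guarded by ``$x\in R$'', set variables by ``$X\subseteq R$'', and labels, order and Boolean connectives are preserved. Under this translation, the truth value of $\Aform_\freshletter$ in $w_f$ depends only on $w$, so these cases immediately give the desired equivalence with the truth of \Aform in $w$ by the inductive hypothesis.

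The substantive case is the \quantifierP quantifier. Rewriting $\quantifierP[X]\Aform$ as $\forall X\,((X\subseteq R\wedge \mathrm{UPer}(X))\to \Aform_\freshletter)$, it suffices to construct an \msoU formula $\mathrm{UPer}(X)$ holding in $w_f$ iff $X\subseteq R$ is ultimately periodic in the $R$-indexing; that is, denoting by $\sigma_0<\sigma_1<\cdots$ the enumeration of $R$, there exists $p\geq 1$ with $\sigma_i\in X \Leftrightarrow \sigma_{i+p}\in X$ for all sufficiently large $i$.

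To construct $\mathrm{UPer}(X)$ I would existentially quantify a set $C\subseteq R$ of \emph{cycle markers} whose consecutive $R$-chunks are required to have ultimately constant size~$p$, and then universally quantify \emph{selectors} $D\subseteq R$ that pick exactly one $R$-position per $C$-chunk with an ultimately constant rank inside the chunk; the remaining demand is that each such $D$ be ultimately contained in $X$ or ultimately disjoint from $X$. The two ``ultimately constant'' conditions both reduce to ``ultimately constant dimension of an encoded vector sequence'' and thus fall under Lemma~\ref{lem:ultimately constant}: constancy of $C$-chunk size is witnessed by the sequence $\vecseqofsets{C}{I}$ where $I$ picks one \freshletter-position per \freshletter-block, and constancy of $D$-rank is witnessed by the sequence $\vecseqofsets{C}{E_f}$ where $E_f$ places one \freshletter-marker for each $R$-position in a chunk that strictly precedes its $D$-element. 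The final ``$D$ ultimately in or out of $X$'' clause is purely \mso.

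The main obstacle is correctness of this construction. Soundness is clear: if $X$ has period $p$, take $C$ to mark every $p$-th $R$-position from some large start; then every rank-$k$ selector picks positions in a single residue class modulo~$p$, which is ultimately all in or all out of $X$. Completeness rests on the observation that, given a witnessing $C$ of ultimate chunk size~$p$, for each rank $k\in\{0,\ldots,p-1\}$ one can exhibit a canonical rank-$k$ selector $D_k$, whence the universal quantifier over selectors forces the membership of $X$ at every sufficiently large $R$-position to depend only on that position's rank, which is exactly ultimate periodicity. Since ``$f$ tends to infinity'' is itself \msoU-expressible as a global guard (applying Lemma~\ref{lem:ultimately constant} to a suitable encoding), assembling these ingredients yields $\mathrm{UPer}(X)$ and completes the induction.
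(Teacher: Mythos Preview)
Your overall strategy matches the paper's: structural induction with relativization to the $\Sigma$-positions for the ordinary connectives, and for $\quantifierP X$ an existential over period markers $C$ plus a universal over constant-rank selectors $D$, using Lemma~\ref{lem:ultimately constant} for the two ``ultimately constant'' tests. The paper's $R$, $R'$ and $I$ are exactly your $C$, $D$ and (a variant of) $I$, and your soundness/completeness discussion is in fact more explicit than the paper's one-line ``it is easy to check''.

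There is, however, a genuine technical slip in how you invoke Lemma~\ref{lem:ultimately constant}. That lemma's formula holds iff the encoded vector sequence is defined, \emph{tends towards infinity}, and has ultimately constant dimension---it does not isolate the last property. When you let $I$ pick \emph{one} $\freshletter$-position per block (and similarly let $E_f$ place one isolated marker per relevant $R$-position), every maximal $I$-interval is a singleton, so every coordinate of $\vecseqofsets{C}{I}$ equals~$1$; the sequence does not tend to infinity and the lemma's formula is false regardless of the dimension. The paper's remedy, which repairs your argument with no structural change, is to take $I$ to be the set of \emph{all} $\freshletter$-positions: the coordinates then become the $\freshletter$-block lengths, which tend to infinity precisely because $f$ does. (The paper also handles ``constant rank of $D$'' without your auxiliary $E_f$, simply by requiring $\dimens{\vecseqofsets{D}{I}}$ to be ultimately constant for this same $I$, which is equivalent to constant rank once the $C$-chunks have constant size.) A minor aside: your closing remark about expressing ``$f$ tends to infinity'' via Lemma~\ref{lem:ultimately constant} is misplaced---that lemma presupposes rather than expresses this condition---but since the statement you are proving already restricts to such~$f$, no global guard is needed inside $\Aform_\freshletter$.
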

\begin{proof}
	Let \Aform be a sentence in \msoP, and $f$ some number sequence tending
	towards infinity.
	Given a word $w$, we construct $w_f$ as explained above.
	
	We construct $\Aform_\freshletter$ by induction on the structure of 
	\Aform, and show that at every step of the induction, \Aform is true in $w$
	if and only of $\Aform_\freshletter$ is true in $w_f$.
	To do so, we use the natural mapping from positions of $w$ into positions of $w_f$
	to handle free variables.
	In particular, every free set variable used in the induction hypothesis
	is supposed to contain only positions not labeled by \freshletter.
	%
	%	If \Aform is a conjunction (resp. disjunction or negation), $\Aform_\freshletter$ is the conjunction (resp. disjunction, negation)
	%	of the result of the induction over its subformul\ae{}.
	%	The quantifications over sets (\resp positions) are translated to quantifications over sets containing no $\freshletter$ (\resp positions not labeled $\freshletter$).
	%	The atoms other than ultimately periodic are translated to themselves.
	%	%while ensuring their argument is not a \freshletter (or does not contain \freshletter).
	Every formula not containing the quantifier $\quantifierP$ is translated
	straightforwardly, only ignoring all \freshletter positions when counting
	(a bounded number, of course).
	For example, the successor relation will be translated by
	''the first position to the right not labelled with \freshletter''.
	It is routine to check that the satisfiability is the same
	in both $w$ and $w_f$, thanks to the natural mapping described above.
	The key point of the induction
	is the case of a formula $\quantifierP[X]\Aform[X]$.
	We translate it to a formula of the form: $\forall Y, 
	\mathrm{UP}_\freshletter(Y) \Rightarrow
	\Aform_\freshletter(Y)$, where $\mathrm{UP}_\freshletter(Y)$ holds 
	for all sets $Y$ which contain only positions not labeled by $\freshletter$
	and which are ``ultimately periodic when ignoring \freshletter's''.
	%	(representing the images of the sets $X$ \bg{what does it mean?}\bg{I understood, but it is not clear when reading sequentially.}),
	%the translation obtained for \Aform by induction must be true for $Y$.

	Checking if $Y$ does not contain any position labeled by $\freshletter$
	is done by the formula $\forall y, y\in Y \Rightarrow \neg \freshletter(y)$.
	It remains to express the property ``$Y$ is ultimately periodic when ignoring \freshletter's''
	as a formula of \msoU with one free set variable $Y$ which contains only positions not labeled by \freshletter.
	It can be done in the following way:
	``$Y$ is ultimately periodic when ignoring \freshletter's'' \iof
	there exist two sets $R$ and $I$ such that:
	\begin{itemize}
			%	\item $X$ contains no \freshletter;
		\item $I$ is exactly the set of all the $\freshletter$,
			$R$ is infinite and disjoint from $I$
			%	(\ie, contain no \freshletter)
			(hence \vecseqofsets RI is defined); 
		\item \dimens{\vecseqofsets RI} is ultimately constant
			(this constant dimension will represent the period of $Y$); 
		\item for every infinite set $R'$ disjoint from $I$,
			alternating with $R$
			(\ie, such that there is exactly one element of $R'$ between two consecutive elements of $R$)
			and such that \dimens{\vecseqofsets{R'}{I}} is ultimately constant, we have that
			all elements of $R'$ are either ultimately in $Y$ or ultimately not in $Y$.
%			\vp{Trying to give a bit of intuition. May be useless.}
%			Informally, we thus capture all $R'$ ultimately of the form $\{x,x+p,x+2p,\cdots\}$ with $p$ the same period
%			as $Y$. By definition of ultimate periodicity, these sets are ultimately either included in $Y$ or disjoint from it.
	\end{itemize}
	These properties can be expressed in \msoU
	thanks to Lemma~\ref{lem:ultimately constant},
	since, as $f$ tends towards infinity, the vector sequence \vecseqofsets RI,
	tends towards infinity. It is easy to check that the conjunction
	of these three properties is equivalent to 
	``$Y$ is ultimately periodic when ignoring \freshletter's''.
  %
	%Moreover, the satisfiability of the formula $\Aform_\freshletter$ over $w_f$ only depends on the fact that $\vecseqofsets{R}{I}$ tends towards infinity and on the satisfiability of $\Aform$ over $w$. Therefore, every $w_f$ with $f$ satisfying the latter satisfies $\Aform_\freshletter$ if and only if $w$ satisfies $\Aform$.
\end{proof}

%%% Local Variables:
%%% mode: latex
%%% TeX-master: "main.tex"
%%% End:

\section{Conclusion}
\label{sec:conclusion}
We have considered the extensions of \mso with the following features
(two quantifiers and two second-order predicates):
\begin{description}[align=right,leftmargin=5.5em,labelwidth=3em,labelsep=0em,itemindent=-1em,itemsep=.5ex,topsep=.25ex]
	\item[{$\quantifierU[X]\Aform[X]$}]
		\quad the formula \Aform[X] is true for sets $X$ of arbitrarily large finite size.
	\item[{\weakU[X]}] 
		\quad for all $k\in\mathbb{N}$, there exist two consecutive positions of $X$ at distance at least $k$.
	\item[{\strongU[R,I]}]
		\quad the sequence of numbers encoded by $\Asetreset,\Asetinc$ (Figure~\ref{fig:first-encoding}) is defined and unbounded.
			\item[{$\quantifierP[X]\Aform[X]$}]
		\quad the formula \Aform[X] is true for all ultimately periodic sets $X$.
\end{description}
\smallbreak

We have proved that \msoU, \msoW and \msoS
have the same expressive power,
and the translations between these logics are effective
(Theorem~\ref{thm:main}, Lemmas~\ref{lem:u-predicate} and~\ref{lem:u-predicates}).
Furthermore, all three have undecidable satisfiability on $\omega$-words by \cite{BPT16}.
Moreover we have seen that \weakU is expressible in \msoP,
while the converse is also true up to some encoding (Theorem~\ref{thm:projections}).
As a consequence, \mso extended with the quantifier \quantifierP{}
has undecidable satisfiability (Theorem~\ref{thm:periodic}).

We believe that \msoW can be reduced to many extensions of \mso,
in a simpler way than reducing \msoU. 
It is for instance the case
for \mso augmented with the quantifier that ranges over periodic sets
(and not necessarily ultimately periodic sets).

%We believe that \msoW can be reduced to many extensions of \mso
%in order to obtain the undecidability of the satisfiability.
%This is for example the case
%for \mso extended with a quantifier that ranges
%over periodic sets (and not necessarily ultimately periodic sets).

%%% Local Variables:
%%% mode: latex
%%% TeX-master: "main.tex"
%%% End:

\bibliographystyle{plain}
\bibliography{bdgps17}
\end{document}